\newcommand*{\AcmFormat}{}%
\newcommand*{\CameraVersion}{}%
\newcommand*{\LongVersion}{}%
\newcommand*{\CutTable}{}%
\newcommand*{\CutText}{}%
\newcommand*{\CutSpace}{}%
\newcommand*{\NoSecTwo}{}%
\newcommand*{\CutProofs}{}%
\newcommand*{\CutBeyond}{}%
\newcommand*{\SingleExample}{}%
\newtheorem{theorem}{Theorem}%[section]
\newtheorem{lemma}[theorem]{Lemma}
\newtheorem{definition}{Definition}%[section]
\newcommand\blfootnote[1]{%
  \begingroup
  \renewcommand\thefootnote{}\footnote{#1}%
  \addtocounter{footnote}{-1}%
  \endgroup
}
\newcommand{\inband}{in-situ}
\newcommand{\Inband}{In-situ}
\newcommand{\DMT}{\Delta M(\tau)}
\newcommand{\CDMT}{C_M}
\newcommand{\DPM}{\Delta P_M(\tau)}
\newcommand{\CTO}{C_{\tau \Theta}}
\newcommand{\CPl}{\eta}
\newcommand{\Ov}{\Theta}
\newlength{\grafflecm}
\newenvironment{chapquote}[2][2em]
  {\setlength{\@tempdima}{#1}%
   \def\chapquote@author{#2}%
   \parshape 1 \@tempdima \dimexpr\textwidth-2\@tempdima\relax%
   \itshape}
  {\par\normalfont\hfill--\ \chapquote@author\hspace*{\@tempdima}\par\bigskip}
\renewcommand\footnotetextcopyrightpermission[1]{} % removes footnote with conference information in first column
\renewcommand\footnotetextcopyrightpermission[1]{} % removes footnote with conference information in first column
\begin{document}

\makeatletter
\ifdefined\AcmFormat
\renewcommand\@formatdoi[1]{\ignorespaces}
\makeatother
\fi
\pagestyle{plain}

%don't want date printed
\date{}

%make title bold and 14 pt font (Latex default is non-bold, 16 pt)
\title{
The Observer Effect in Computer Networks
}

\ifdefined\AcmFormat
\ifdefined\BlindReview
\else

\author{Tal Mizrahi}
\affiliation{%
  \institution{Technion --- Israel Institute of Technology}
}

\author{Michael Schapira}
\affiliation{%
  \institution{Hebrew University of Jerusalem}
}

\author{Yoram Moses}
\affiliation{%
  \institution{Technion --- Israel Institute of Technology}
}
\fi
\else
\ifdefined\BlindReview
\else
%\author{
%{Tal Mizrahi, Yoram Moses}\\
%Technion --- Israel Institute of Technology
%}
\author{
{Tal Mizrahi\textsuperscript{\ensuremath\diamond}, Michael Schapira\textsuperscript{\ensuremath\dagger}, Yoram Moses\textsuperscript{\ensuremath\diamond}} \\ 
\textsuperscript{\ensuremath\diamond}Technion --- Israel Institute of Technology, \textsuperscript{\ensuremath\dagger}Hebrew University of Jerusalem
}

\fi
\fi

% Use the following at camera-ready time to suppress page numbers.
% Comment it out when you first submit the paper for review.
\thispagestyle{empty}

\ifdefined\AcmFormat
\begin{abstract}

Network measurement involves an inherent tradeoff between accuracy and overhead; higher accuracy typically comes at the expense of greater measurement overhead (measurement frequency, number of probe packets, etc.). Capturing the ``right'' balance between these two desiderata -- high accuracy and low overhead -- is a key challenge. However, the manner in which accuracy and overhead are traded off is specific to the measurement method, rendering apples-to-apples comparisons difficult. To address this, we put forth a novel analytical framework for quantifying the accuracy-overhead tradeoff for network measurements. Our framework, inspired by the \emph{observer effect} in modern physics, introduces the notion of a \emph{network observer factor}, which formally captures the relation between measurement accuracy and overhead. Using our ``network observer framework'', measurement methods for \textit{the same} task can be characterized in terms of their network observer factors, allowing for apples to apples comparisons. We illustrate the usefulness of our approach by showing how it can be applied to various application domains and validate its conclusions through experimental evaluation.
\ifdefined\LongVersion
\blfootnote{This paper is an extended version of~\cite{ObserverANRW}, published in the ACM Applied Networking Research Workshop (ANRW), 2024.}
\fi
\end{abstract}

\maketitle

\else
\maketitle

\section*{Abstract}

Network measurement involves an inherent tradeoff between accuracy and overhead; higher accuracy typically comes at the expense of greater measurement overhead (measurement frequency, number of probe packets, etc.). Capturing the ``right'' balance between these two desiderata--high accuracy and low overhead--is a key challenge. However, the manner in which accuracy and overhead are traded off is specific to the measurement method, rendering apples-to-apples comparisons difficult. To address this, we put forth a novel analytical framework for quantifying the accuracy-overhead tradeoff for network measurements. Our framework, inspired by the \emph{observer effect} in modern physics, introduces the notion of a \emph{network observer factor}, which formally captures the relation between measurement accuracy and overhead. Using our ``network observer framework'', measurement methods for \textit{the same} task can be characterized in terms of their network observer factors, allowing for apples to apples comparisons. To exemplify the usefulness of our approach, we show how it can be applied to various application domains and validate its conclusions through experimental evaluation.
\fi

\vspace{5mm}

\begin{chapquote}{\textit{Socrates}}
\hspace{-5mm}
The unexamined life is not worth living
\end{chapquote}

\vspace{-2mm}

\ifdefined\AddSpace\vspace{3mm}\fi
\ifdefined\AddSpaceInfocom\vspace{2mm}\fi
\section{Introduction}
\label{IntroSec}
\ifdefined\AddSpace\vspace{1mm}\fi
\ifdefined\AddSpaceInfocom\vspace{1mm}\fi
Network measurement is critical for detecting failures and anomalies, and also for verifying that desired performance bars, e.g., Service Level Agreements (SLAs), are met. Measurements can be realized in a variety of ways, including injecting periodic or on-demand control (probe) packets, continuously streaming telemetry information to central analyzers, and in-band collection of measurement data. 
%and giving rise to new measurement approaches and mechanisms.

%One of the greatest challenges of service providers is to continuously guarantee that the Service Level Agreement (SLA) is met, and specifically that performance metrics such as the network delay within the premises of the service provider are within a specific range. 

\begin{sloppypar}
Network measurement involves an inherent tradeoff between accuracy of the measured metric (e.g., network delays, throughput, etc.) and the impact of measuring on network performance (e.g., the excess bandwidth consumed by probe packets, the increase in packet loss rate). Consider, for example, the challenge of guaranteeing that network delay within the premises of a service provider is below a certain value. To accomplish this, service providers typically rely on periodic delay measurements. For instance, a service provider that operates a network running MPLS-TP (Multi-Protocol Label Switch Transport Profile) can run periodic measurements using \emph{Delay Measurement Messages (DMM)}~\cite{rfc6374}. Naturally, to obtain accurate and timely measurements, DMM messages should be sent at high frequency. However, too high a frequency (e.g., $300$ messages per second) can entail prohibitively high communication overhead (and even affect network delay!). In this example, the measurement accuracy can be quantified in terms of how long it takes to detect that network delays exceed the threshold, whereas the performance overhead is the increase in data rate on the wire induced by sending DMM messages.

In this paper, we show that for a broad variety of measured metrics and notions of measurement overhead (including those of the above example), the accuracy-overhead tradeoff for any measurement method can be captured by a parameter that we call the \emph{network observer factor} induced by this method. Informally, the network observer factor $\CPl$ (inspired by the observer factor from modern physics) is a \textit{method-specific} constant that quantifies the measurement overhead per time units. The network observer factor provides a powerful conceptual framework for \textit{characterizing} the efficiency of a measurement method, and can serve as a useful metric for \textit{comparing} measurement methods for the same task. Going back to the DMM example, suppose that the service provider is considering two possible delay measurement protocols for MPLS-TP with similar functionality, as in~\cite{rfc6374,G8113.1}. As shall be explained below, by deriving the observer factors for the two protocols, the service provider can perform an `apples-to-apples' comparison of the two protocols.  
\end{sloppypar}

\ifdefined\AddSpace\vspace{3mm}\fi

\begin{sloppypar}
Our contributions are summarized below:

\begin{itemize}

\item We propose a formal framework for reasoning about the  accuracy-overhead tradeoff of a measurement method (the ``network observer effect'') and for characterizing measurement methods in terms of their ``network observer factor''. We view this theoretical framework as a powerful lens for evaluating the efficiency of measurement methods and believe that complementing our approach with the prevalent experimental/empirical analysis can facilitate deeper insights into \emph{inherent} tradeoffs between accuracy and overhead. %We explain how our conceptual framework can be leveraged for quantifying the efficiency of a measurement method and for contrasting different measurement methods.

\item We formally analyze the theoretical properties of the network observer effect and its associated factor, shedding light on the relation between performance measurement accuracy and measurement rate. In particular, we identify broad classes of measurement metrics and notions of measurement overhead for which the network observer factor can be analytically derived.

\item We validate the usefulness of our approach, as well as our analytical observations, by experimentally analyzing several measurement protocols of interest, spanning passive, active, and in-band measurements. In particular, we show how by fixing the level of measurement accuracy, our methodology facilitates apples-to-apples comparison of the efficiency of different measurement methods. %We argue that by complementing today's empirical and experimental evaluation
%Specifically, our evaluation examines different measurement methods under similar uncertainty (accuracy) conditions, we evaluate the measurement overhead of each method.
    
\end{itemize}
\end{sloppypar}

\ifdefined\CutProofs
\ifdefined\BlindReview
Due to space limits, proofs and detailed discussion are deferred to the extended version of this paper~\cite{ObserverTechnicalBlind}.
\else
Due to space limits, proofs and detailed discussion are deferred to the extended version of this paper~\cite{ObserverTechnical}.
\fi
\fi

\ifdefined\AddSpace\vspace{2mm}\fi
\ifdefined\AddSpaceInfocom\vspace{2mm}\fi
\section{Inspiration and Overview}
\ifdefined\AddSpace\vspace{1mm}\fi
\ifdefined\AddSpaceInfocom\vspace{1mm}\fi
\subsection{The Observer Effect in Physics}
Heisenberg's uncertainty principle states that the position and momentum of a particle cannot both be measured precisely. 
This principle is often expressed by the following uncertainty relation~\cite{heisenberg1930physical}:

\begin{equation}
\label{eq:HeisenbergUncertainty}
\Delta x \cdot \Delta p \geq \hbar
\end{equation}

where $\Delta x$ is the level of uncertainty regarding a particle's position, $\Delta p$ is the uncertainty regarding its momentum, and~$\hbar$ is the Planck constant. 

Heisenberg argued~\cite{heisenberg1930physical} that a Gamma ray that is used to measure the particle's location will affect the particle's momentum. In other words, the measurement procedure affects the measured system--a phenomenon that later became known as the \emph{observer effect}.

Heisenberg's uncertainty principle has been found to be valid even if the system is not measured by an observer, and thus the uncertainty principle and the observer effect are regarded as two distinct principles of quantum mechanics. Notably, the uncertainty relation (Eq.~\ref{eq:HeisenbergUncertainty}) is applicable to the analysis of both principles.

%Heisenberg argued~\cite{heisenberg1930physical} that `observation of the position will alter the momentum by an unknown and undeterminable amount such that after carrying out the experiment our knowledge of the electronic motion is restricted by the uncertainty relation'.

%whereas his argument that a measurement procedure affects the measured system became known as the `observer effect'. 
%Nevertheless, Heisenberg's Uncertainty Relation is often used in the analysis of the observer effect, i.e., the property of a measurement procedure that affects the measurement result. 

\ifdefined\AddSpace\vspace{2mm}\fi
\ifdefined\AddSpaceInfocom\vspace{2mm}\fi
\subsection{The Network Observer Effect}
\ifdefined\AddSpace\vspace{1mm}\fi
\ifdefined\AddSpaceInfocom\vspace{1mm}\fi

Communication networks can also be regarded as exhibiting an observer effect, in the sense that the act of measuring the performance of a network often affects network performance. Inspired by Heisenberg's analysis of the uncertainty relation (Eq.~\ref{eq:HeisenbergUncertainty}), we introduce an analogous uncertainty relation.

\vspace{3mm}
\textbf{The uncertainty relation in networks:}
\begin{equation}
\label{eq:Uncertainty1}
\Delta M \cdot \Delta P \geq \CPl 
\end{equation}

%In order to gain intuition into this effect, we consider a periodic active measurement process using \emph{Delay Measurement Messages (DMM)}~\cite{Y1731}, which are very commonly used in carrier networks for continuously measuring the network delay. Since the network performance constantly changes, one may wish these DMM messages to be sent at a high frequency, in order to increase the confidence in and freshness of the measurement result. However, if DMM messages are sent at a high frequency (e.g., 300 messages per second) then they may (or may not) affect the network load, possibly affecting the performance of the data traffic that is being monitored.

%In order to gain intuition into this effect, we consider \emph{Ping}, a very commonly used tool which verifies connectivity to a given destination, and estimates the round-trip time and the packet loss rate along the path using ICMP Echo messages. Since the network performance constantly changes, one may wish these probe ICMP messages to be sent at a high frequency, in order to increase the confidence in and freshness of the measurement result. However, if ICMP messages are sent at a high frequency (e.g., 1000 messages per second) then they may (or may not) affect the network load, possibly affecting the performance of the data traffic that is being monitored.

%Clearly, as the measurement granularity increases (the DMM message frequency in this example), so does the impact on the network performance, resulting from the measurement overhead. 

In the above equation, $\Delta M$ is the uncertainty in the measured metric $M$ and $\Delta P$ is the change in a network performance metric, $P$, which is affected by the measurement. Intuitively, $\Delta M$ is the measurement's \emph{uncertainty} (or accuracy), whereas $\Delta P$ is the measurement's \emph{impact}, which quantifies the effect of the measurement on network performance. $\Delta P$ can have different meanings in different contexts, e.g., excess packet loss or delay induced by measurements, or excess data transfer charges induced by measurement probes. Naturally, lower uncertainty comes at the cost of higher measurement overhead. Going back to our \emph{DMM} example, suppose that our goal is to minimize the time it takes for the node that runs the periodic DMM to detect unusual congestion or failure. %Ideally, we would like to be able to detect failures as soon as possible. 
To reduce measurement uncertainty in terms of the time it takes to detect anomalous network behavior ($\Delta M$), the frequency of the DMM messages should be increased. This, however, would impact the performance by increasing the data rate on the wire (with an excess data rate of $\Delta P$), as the uncertainty relation suggests. 

Like Heisenberg's uncertainty relation, the network uncertainty relation clearly does not apply to all possible performance metrics and all possible measurement methods. In Section~\ref{AnalyzingSec} we characterize the conditions under which the uncertainty relation is applicable. %if the network operator's billing policy includes data transfer charges; Amazon's EC2 customers are charged $\$ 0.09$ per GB of transmitted data~\cite{ec2}. Furthermore, the measurement overheads may have additional implications, such as higher power consumption or increased network latency.

Eq.~\ref{eq:Uncertainty1} captures the natural tradeoff between measurement accuracy and overhead; decreasing the uncertainty in performance measurement causes the impact on the network performance to increase. This tradeoff is illustrated in Fig.~\ref{fig:UncertaintyRelation}, which presents the impact as a function of the measurement uncertainty in a specific scenario. The curve represents the theoretical lower bound of the measurement impact according to Eq.~\ref{eq:Uncertainty1}. Our experimental evaluation (Section~\ref{EvalSec}) includes experiments that show that the measured impact nearly coincides with the theoretical lower bound of the uncertainty relation.

\begin{figure}[htbp]
%\ifdefined\AddSpace\vspace{2mm}\fi
	\centering
  \fbox{\includegraphics[trim={6cm 2cm 0 0},clip,height=7\grafflecm]{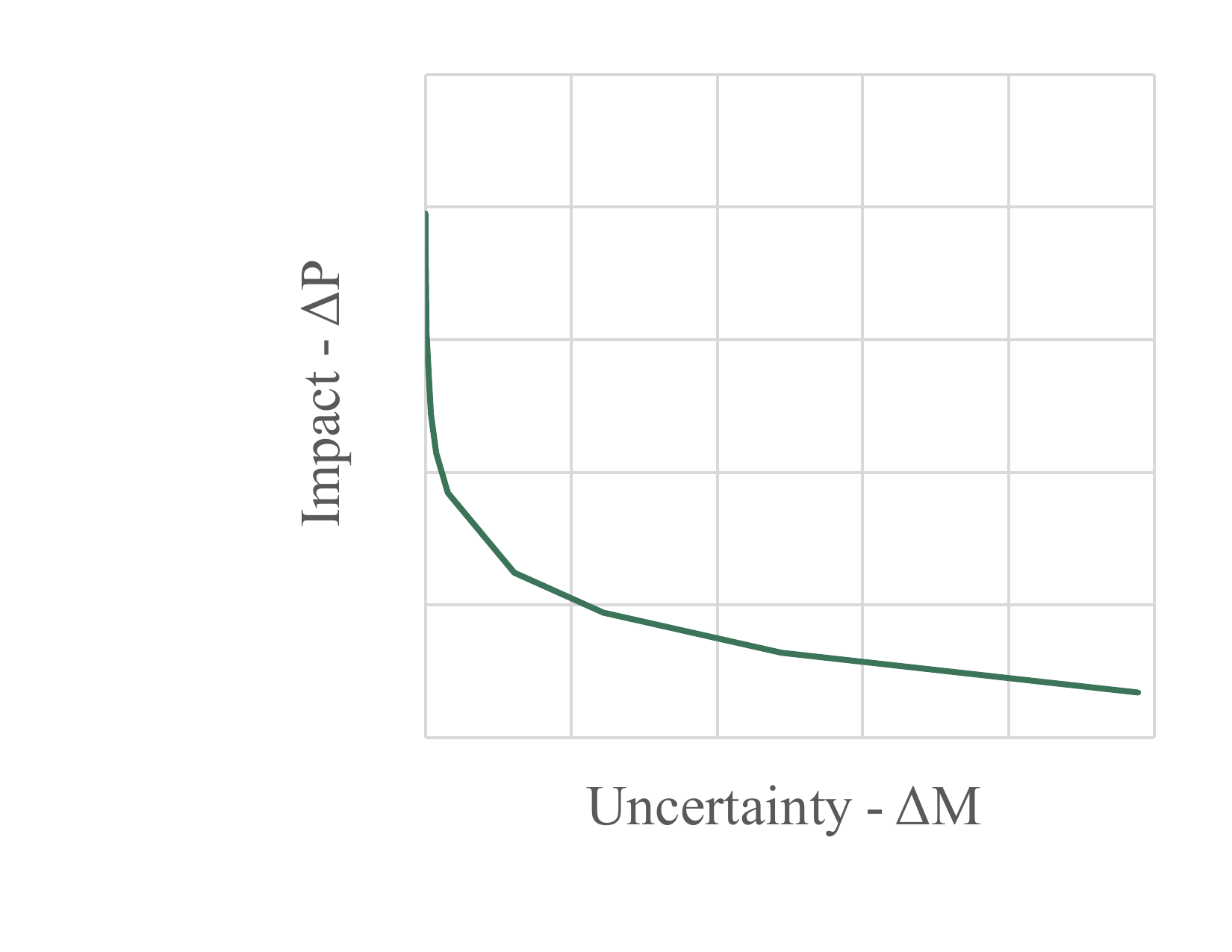}}
	\captionsetup{justification=centering}
  \caption{The impact as a function of the measurement uncertainty. The curve represents the lower bound, where  ${\Delta M \cdot \Delta P = \CPl}$.}
  \label{fig:UncertaintyRelation}
\ifdefined\AddSpace\vspace{2mm}\fi
\ifdefined\CutSpaceInfocom\vspace{-3mm}\fi
\ifdefined\AddSpaceInfocom\vspace{2mm}\fi
\end{figure}

$\CPl$ is a constant we call the \emph{network observer factor}, which quantifies the measurement overhead per time unit. Importantly, unlike the global Planck constant, $\CPl$ is \emph{specific} to the measurement method. $\CPl$ can be used as a metric for the efficiency a measurement method. Indeed, we contrast different measurement methods in terms of their associated $\CPl$ constants to gain insight into the relation between measurement accuracy and overhead. In our experimental evaluation (see Section~\ref{EvalSec}), we validate our theoretical findings and demonstrate the usefulness of our approach. We accomplish this by evaluating the considered measurement methods under similar uncertainty in measurement accuracy ($\Delta M)$ and observing the impact on network performance induced by each method ($\Delta P$), allowing for an apples-to-apples comparison. Interestingly, our theoretical analysis and evaluation show that, 
in the context of network delay measurements, in-situ methods such as In-Network Telemetry (INT), which seemingly require high overhead, have low uncertainty, and thus have almost identical impact on performance as other measurement methods (e.g., passive measurements) under the same level of achieved measurement accuracy.

\ifdefined\NoSecTwo
\else
\begin{sloppypar}
\ifdefined\AddSpace\vspace{9mm}\fi
\section{Why the Observer Effect Matters}
\ifdefined\AddSpace\vspace{1mm}\fi
The network observer effect, as presented above, applies to various types of networks: data centers, wide area and carrier networks, campus networks, and even home networks. We argued that a measurement method that has low uncertainty has a high performance impact on the network. One could argue that the fast growth of large-scale networks has opened the door to overprovisioned network resources (e.g.,~\cite{microsoft}), where the overhead of network measurement may arguably be negligible, or the measurement accuracy could be relaxed.

We present a few crisp use cases that demonstrate the observer effect in high-speed networks. The use cases demonstrate why highly accurate (detailed) measurement is important, and why overprovisioning does not necessarily avoid the measurement impact.
\end{sloppypar}

\subsection{Use Case 1: In-situ Measurement}
\ifdefined\AddSpace\vspace{1mm}\fi
In-situ measurement (also known as in-band or in-network telemetry) provides fine-grained and detailed monitoring and measurement by having every switch or router push telemetry information into the header of every data packet. Both research-driven and industry-driven protocols have been defined in this context (e.g.,~\cite{IOAM,INT,kumar-ippm-ifa-01,kimband}). In-situ measurement allows very detailed information about the path taken by every packet, the performance of switches along the path, and potential failures. However, the obvious penalty is the per-packet overhead which may cost tens of bytes per packet. In-situ measurement is an important example of a case where despite the high measurement overhead, this approach is increasingly gaining attention from the networking community due to the fine-grained information it provides, including at least one publicly known deployment~\cite{li2019hpcc}.

\subsection{Use Case 2: Broadband Home Access}
\ifdefined\AddSpace\vspace{1mm}\fi
The cost structure of home broadband subscriptions is bandwidth-sensitive, and therefore allocating a fraction of the bandwidth to measurement and monitoring would be frowned upon by home subscribers. The overprovisioning approach that has become common in public cloud networks~\cite{microsoft} is obviously not feasible in home networks. Therefore, an accurate measurement is likely to have noticeable impact on the performance of a home subscriber network.
	
\subsection{Use Case 3: Lossless Service}
\ifdefined\AddSpace\vspace{1mm}\fi
Consider a data center network in which storage-related traffic is forwarded as a lossless service. Lossless delivery is guaranteed by assigning a dedicated traffic class to this service, and by using Priority Flow Control (PFC).

\begin{sloppypar}
The network operator measures the lossless service, as even infrequent packet losses should be monitored and troubleshooted. The operator may occasionally encounter a few packet drops, caused by short micro-bursts that temporarily fill the queues. The measurement should be accurate and fine-grained, allowing to detect even a small number of drops. At the same time, the measurement overhead becomes significant when the queues are full or nearly full, since it then causes more packet drops than would be caused if the flow had not been measured. As shown in Fig.~\ref{fig:ObservedVsUnobserved}, when one of the links reaches its capacity, even a small increase in the traffic rate will affect the number of packet drops. This use case demonstrates why even the slightest measurement impact may in some cases be noticeable in the network we aim to measure.

\begin{figure}[htbp]
%\ifdefined\AddSpace\vspace{2mm}\fi
	\centering
  \fbox{\includegraphics[height=7\grafflecm]{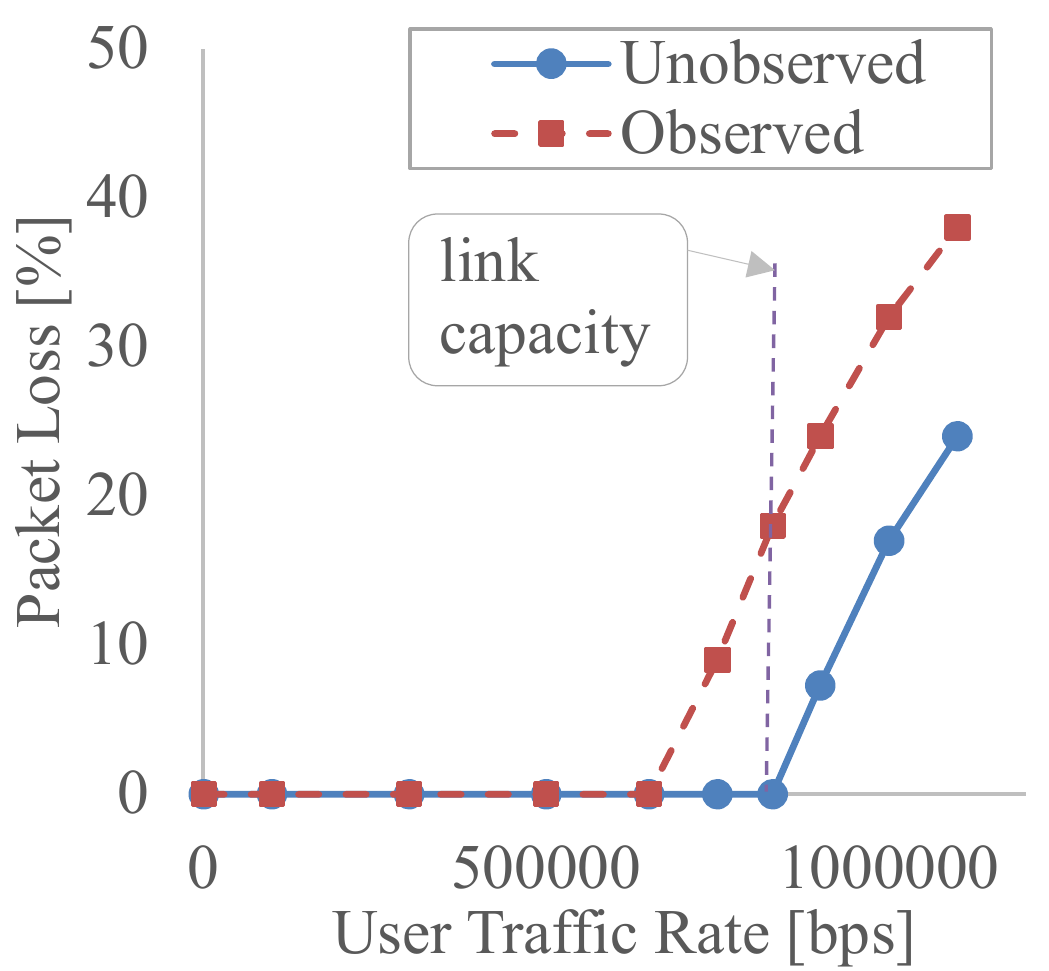}}
	\captionsetup{justification=centering}
  \caption{Experimental example of the network observer effect: an observed flow (monitored by IOAM~\cite{IOAM}) has a higher packet loss rate than unobserved flow. Fixed user traffic rate.}
  \label{fig:ObservedVsUnobserved}
\ifdefined\AddSpace\vspace{2mm}\fi
\ifdefined\CutSpaceInfocom\vspace{-3mm}\fi
\ifdefined\AddSpaceInfocom\vspace{2mm}\fi
\end{figure}

\subsection{Use Case 4: Service-Level Agreement (SLA) Measurement}
\ifdefined\AddSpace\vspace{1mm}\fi
A Service-Level Agreement (SLA) in carrier networks is an agreement between a customer and a service provider regarding the performance of a network service. As defined in the MEF 10.3 specification~\cite{MEF10.3}, a Service Level Specification (SLS) defines the performance objectives for a given bandwidth. I.e., the SLS specifies the agreed delay, delay variation, and loss rate that are guaranteed as long as the customer does not exceed an agreed bandwidth.
\end{sloppypar}

If a customer is interested in measuring the network, in order to guarantee that the SLS is satisfied, the measurement overhead effectively decreases the user's bandwidth compared to the bandwidth that is specified in the SLS.

\fi

\begin{sloppypar}
\ifdefined\AddSpace\vspace{2mm}\fi
\section{Analyzing the Observer Effect}
\label{AnalyzingSec}
\ifdefined\AddSpace\vspace{1mm}\fi

We now investigate the network uncertainty relation, which is inspired by Heisenberg's uncertainty relation and its connection to the observer effect. We first present the terminology, assumptions, and the model used in the analysis.

\ifdefined\AddSpace\vspace{2mm}\fi
\subsection{Measurement Classes}
\label{ClassSec}
\ifdefined\AddSpace\vspace{1mm}\fi
We analyze the measurement granularity in terms of three main measurement classes~\cite{rfc7799}: passive, active, and hybrid measurement.

\subsubsection{Passive Measurement}
Passive methods observe network traffic without modifying it and without transmitting control packets along the data path. Notably, since measurements are performed locally by each network node, measurement data is typically exported to a central aggregator. Measurement data may be exported periodically, on-demand, or triggered by specific events. 
%The export process has a cost.
%A common example of passive measurement is NetFlow~\cite{Cisco}, which is used by switches and routers to monitor network flows.
A common protocol used for exporting passive measurement results is gNMI~\cite{gnmi}, which can be used by network devices to stream telemetry information to central collectors.

\subsubsection{Active Measurement} 
Active methods use synthetic traffic to measure the network. \emph{Ping} and \emph{Traceroute} are common examples. Home network speed tests (e.g.,~\cite{Speedtest}) perform the measurement by temporarily running synthetic traffic at a high rate, while other methods use periodic control messages, such as Continuity Check Messages (CCM)~\cite{IEEE802.1ag,Y1731}, to continuously monitor the network. 

To analyze the overhead of active measurement we focus on periodic measurement, which continuously monitors the traffic. This allows to detect a problem once it occurs, but obviously comes at the cost of continuous overhead.

\subsubsection{\Inband\ measurement}
\Inband\ (or in-band) methods, such as 
In-band Network Telemetry (INT)~\cite{kimband} and In situ OAM (IOAM)~\cite{IOAM},
piggyback measurement data onto live user packets.\footnote{Note that Direct Exporting~\cite{dex} or Postcard Mode~\cite{TelemetryReport} are forms of passive measurement in our context, as telemetry data is not forwarded along the data path.} This data is peeled off at a decapsulation node, which also exports some (or all) of the measurement data to a central collector.

Note that \inband\ measurement is one example of the hybrid methods defined in~\cite{rfc7799}. Other hybrid methods are also defined there, but strictly from a measurement \emph{overhead} perspective, which is what the current paper is focused on, each of these other methods belongs to one of the three classes above.

\vspace{7mm}
Notably, we should distinguish between data path overhead and management overhead. Data path overhead is overhead that is induced along the path of the user traffic that is being measured. In contrast, management overhead is required for exporting information to a central collector; the export path may or may not coincide with the data path. %In some cases management traffic is sent through an out-of-band network, while in other cases management messages are sent along the same path as the user traffic.
Active and \inband\ measurements incur data path overhead, since for these methods the overhead occurs along the same path as the user traffic. Furthermore, these two methods typically also entail passive overhead, derived from allowing the measurement information to be exported to a collector. Passive measurement entails only management overhead. 

%The cost of data path overhead is easier to analyze, since measurement-induced traffic directly competes with user traffic, whereas the cost of management overhead, which does not necessarily compete with user traffic, may come from a different budget.

In each of the measurement classes, the overhead is quantified in a different way. Active measurement uses periodic control messages, and thus the overhead is on a per-time-unit basis. The overhead of \inband\ measurement is quantified on a per-packet basis. Passive measurement requires only management overhead, which we also consider on a per-packet basis.

\ifdefined\CutTable
\else
\ifdefined\CutSpace\vspace{-3mm}\fi
\ifdefined\AddSpace\vspace{2mm}\fi
\begin{table}[htbp]
		\centering
    \begin{tabular}{| p{1.8cm}<{\centering} || p{1.7cm}<{\centering} | p{1.7cm}<{\centering} | p{1.7cm}<{\centering} |}
    \hline
    & Passive & Active & \Inband\ \\ \hline \hline
    Data path overhead& & \checkmark & \checkmark \\ \hline 
    Management overhead& \checkmark & \checkmark & \checkmark \\ \hline 
    Overhead granularity& per time unit & per time unit & per packet \\ \hline 
    \end{tabular}
    \caption{Measurement classes.}
    \label{table:MeasurementClasses}
		%\vspace{-5mm}
\end{table}
\ifdefined\AddSpace\vspace{1mm}\fi
\fi

\ifdefined\AddSpace\vspace{2mm}\fi
\subsection{Metrics}
\ifdefined\AddSpace\vspace{1mm}\fi
%The observer effect in its most well-known form is that the act of observing a phenomenon changes the observed phenomenon. In the context of communication networks we phrase the observer effect as follows.
\end{sloppypar}

%\textbf{The observer effect in networks:} \emph{The act of measuring the performance of a network affects the measured performance.}
As discussed in Section~\ref{IntroSec}, there is a tight coupling between the desired measurement \emph{granularity} and the \emph{impact} of the observation on the measurement. A key question is which metrics, i.e., the $M$ and $P$ in Eq.\ref{eq:Uncertainty1}, are amenable to our type of analysis. 

%TBD - But how do you quantify an observer effect? May affect delay, bandwidth, packet loss or other metrics. We quantify the effect of the observer on the measurement by looking at the network overhead that is induced by the measurement. The overhead can be seen as an indication of the biased measurement error caused by the measurement itself.

%The desired measurement \emph{granularity} is a bit more difficult to quantify. 
%Intuitively, we are looking for a way to quantify how detailed and up-to-date the measurement should be. However, performance measurement methods vary significantly, both in the metrics they measure and in the way they measure them. Different measurement metrics such as packet loss, delay, bandwidth, and others determine different contexts for the term \emph{measurement granularity}. 

The uncertainty relation is clearly not applicable to all possible performance metrics. In our analysis the measured metric $M$ is specifically a \emph{sensitive performance metric}, as defined in the following subsection. The impacted performance metric $P$ is a rate metric, i.e., a metric that is affected by the data rate or loss rate. Our analysis focuses on metrics that are impacted by the measurement regardless of the network utilization. For example, if $\Delta P$ represents the impact of the measurement on the traffic rate in the measured path or link, this impact is inherent in the measurement, regardless of how utilized the network is. Consequently, the analysis of the observer effect is not affected by overprovisioning. We prove below that the uncertainty relation is applicable in this context.

\ifdefined\AddSpace\vspace{2mm}\fi
\subsection{Model and Definitions}
\label{Model}
\ifdefined\AddSpace\vspace{1mm}\fi
Our analysis assumes that measurements are performed for a specific traffic flow or set of flows, where a flow consists of a set of packets with common characteristics, such as 5-tuple properties. The \emph{data rate} of a traffic flow in a network is defined to be the number of bits per time unit that are successfully delivered from the source to the destination, including the data plane headers and any overhead that may be incurred by the performance measurement. The \emph{loss rate} of a flow is defined to be the number of bits per time unit that are sent by the source and not delivered to the destination. It is also assumed that the data rate of the analyzed flow(s) is constant.\footnote{This assumption simplifies the analysis of the uncertainty relation. In practice, the data rate is not necessarily constant, but the analysis can be performed in sufficiently short time intervals, in which the data rate is roughly constant.}

Table~\ref{Notations} summarizes the notation used in this section.

%\ifdefined\CutSpace\vspace{-3mm}\fi
\ifdefined\AddSpace\vspace{2mm}\fi
\begin{table}[!h]
		\centering
    \begin{tabular}{| l | p{6.5cm}|}
    \hline
	  $M$ & A sensitive performance metric that is measured periodically. \\
	  $\tau$ & The measurement period. \\
	  $\DMT$ & The amount of uncertainty in $M$ when measured periodically with a period $\tau$. \\
	  $P$ & A rate metric. \\
	  $\DPM$ & The measurement impact on $P$ when $M$ is measured periodically with a period $\tau$. \\
		$\Ov$ & The measurement overhead, measured in bits per time unit. \\
	  \hline
    \end{tabular}
    \caption{Notations}
    \label{Notations}
\end{table}
\ifdefined\AddSpace\vspace{1mm}\fi
%\ifdefined\CutSpace\vspace{-9mm}\fi

A \emph{measurement} in our context is a process that observes a traffic flow using a fixed measurement method, which requires an overhead of $\Ov$ bits per time unit. The overhead $\Ov$ and the amount of uncertainty in the performance metrics $M$ and $P$ depend on the specific measurement method. As discussed in Section~\ref{ClassSec}, in different measurement classes the overhead may have a different impact on $P$, which may in turn affect the data path or the management path (or both). 

\begin{definition}[Sensitive performance metric]
A performance metric is said to be \emph{sensitive} if the amount of uncertainty when the metric is measured periodically with a period $\tau$ can be represented by a function $\DMT$ such that $\DMT = \CDMT \cdot \tau$ for some constant $\CDMT$.
\end{definition}

\begin{sloppypar}
\begin{definition}[Uncertainty in a measured metric]
Let $M$ be a performance metric that is measured periodically with a period $\tau$, and $M(t)$ be the measured value at time $t$. The uncertainty $\DMT$ is the minimal value for which $|M(t')-M(t)|\leq \DMT$ for all $t'$ such that $t<t'\leq t+\tau$.
\end{definition}

\begin{definition}[Rate metric]
A metric that either quantifies the traffic rate or the traffic loss rate is called a \emph{rate metric}.
\end{definition}

\begin{definition}[Measurement impact]
We define the measurement impact $\DPM$ of a rate metric $P$ when a sensitive metric $M$ is measured with a period $\tau$ to be the minimal upper bound on the difference between the value of the metric $P$ when $M$ is measured and the value of the metric $P$ without the measurement.
\end{definition}
\end{sloppypar}

\ifdefined\AddSpace\vspace{2mm}\fi
\subsection{The Network Uncertainty Relation}
\ifdefined\AddSpace\vspace{1mm}\fi

\begin{comment}
\begin{lemma}
\label{OverheadLemma}
If a measurement is performed periodically with a period $\tau$ and the mean measurement overhead per time unit is $\Ov$, then there exists a constant $\CTO$ such that $\tau \cdot \Ov = \CTO$.
\end{lemma}

\begin{proof}
Since the measurement is performed periodically, let $\theta$ be the overhead required for each measurement. Thus, $\theta$ bits of overhead are used in each period $\tau$, and thus $\Ov = \theta / \tau$. Since our analysis is focused on a specific measurement protocol, $\theta$ is a constant, and thus $\tau \cdot \Ov$ is constant.
\end{proof}
\end{comment}

Our analysis starts with a basic claim (Lemma~\ref{RateMetricLemma}) about the connection between the measurement impact $\DPM$ and the measurement overhead $\Ov$.

\begin{lemma}
\label{RateMetricLemma}
If a metric $M$ is measured periodically with a period $\tau$ for a given flow, the mean measurement overhead per time unit is $\Ov$, and $P$ is a rate metric of the flow, then the impact $\DPM$ satisfies $\DPM \geq \Ov$.
\end{lemma}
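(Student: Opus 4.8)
The plan is to derive the bound from two facts: (i) by the definition of the measurement overhead, running the measurement makes the source transmit $\Ov$ extra bits per time unit along the path carrying the measured flow, compared with the unmeasured execution on the same user traffic; and (ii) the impact $\DPM$ is a \emph{worst-case} quantity --- the minimal upper bound on the gap between the value of $P$ with and without the measurement --- so it suffices to exhibit one scenario in which that gap is at least $\Ov$.

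First I would record the elementary bookkeeping identity behind the argument. In any fixed scenario, every bit the source puts on the path is either successfully delivered (and counted in the data rate) or dropped (and counted in the loss rate), so the data rate plus the loss rate equals the transmit rate; since the measurement raises the transmit rate by exactly $\Ov$ while leaving the user traffic unchanged, the increase in the data rate plus the increase in the loss rate equals $\Ov$ in that scenario. Because extra transmitted bits can only weakly increase the number delivered, the data-rate increase is nonnegative and the loss-rate increase lies in $[0,\Ov]$; in particular the impact on either metric is at most $\Ov$, and the content of the lemma is the matching lower bound.

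Now I split on the type of the rate metric $P$. If $P$ is the traffic (data) rate, take a scenario in which the relevant path has at least $\Ov$ spare capacity; then all $\Ov$ overhead bits per time unit get through, the delivered rate grows by exactly $\Ov$, so $P$ increases by $\Ov$ in that scenario, hence $\DPM \ge \Ov$. If $P$ is the loss rate, take a scenario in which the bottleneck link is already saturated by the flow, so it delivers at most what it delivered without the measurement; then the $\Ov$ extra bits per time unit must be dropped, the loss rate grows by $\Ov$, and again $\DPM \ge \Ov$. In both cases the witnessing scenario is consistent with the model, since only the measurement overhead is added and the user data rate is unchanged.

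The main obstacle is the loss-rate case: one must ensure the extra drops are charged to the measured flow rather than displaced onto cross-traffic, and that the saturated scenario respects the standing assumption that the flow's data rate is constant. I would resolve this by choosing the witnessing scenario to have no cross-traffic, with the flow alone meeting the bottleneck capacity, so that the $\Ov$ bits per time unit of overhead necessarily crowd out $\Ov$ bits per time unit of the flow's own traffic. A secondary point worth flagging is that the argument is existential over scenarios rather than pointwise --- which is precisely why the impact is defined as a minimal upper bound, and is what makes the bound immune to overprovisioning.
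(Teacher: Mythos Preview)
Your proposal is correct and follows essentially the same approach as the paper: split on whether $P$ is a data-rate or a loss-rate metric, and in each case exhibit a scenario (overprovisioned vs.\ saturated bottleneck) in which the gap between measured and unmeasured values of $P$ reaches $\Ov$. Your version is more explicit than the paper's about why a single witnessing scenario suffices (the minimal-upper-bound definition of impact) and about the transmit-rate bookkeeping identity, but these are elaborations of the same argument rather than a different route.
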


\ifdefined\CutProofs
The proof of Lemma~\ref{RateMetricLemma} and of all technical claims that follow can be found in~\cite{ObserverTechnical}.
\else
\begin{proof}
A rate metric $P$ may be either a loss rate metric or a data rate metric. For the given measured metric $M$ and period $\tau$, we denote the loss rate uncertainty by $\Delta L_M(\tau)$ and the data rate uncertainty by $\Delta D_M(\tau)$. The loss rate uncertainty results from the measurement overhead, i.e., the highest loss rate occurs when the overhead exceeds the flow bandwidth, causing a loss rate of $\Ov$. Thus the difference between the maximal loss and the minimal loss satisfies ${\Delta L_M(\tau) \geq \Ov}$. The data rate uncertainty results from the fact that measurement (overhead) traffic may either be lost or not, depending on the network utilization (or overprovisioning), and therefore $\Delta D_M(\tau)\geq \Ov$. 
\end{proof}
\fi

The following theorem is the networking variant of Heisenberg's uncertainty relation. Intuitively, the theorem captures the tradeoff between the uncertainty in a measured performance metric $M$ and its effect on a corresponding rate metric $P$.

\begin{theorem}
\label{UncertaintyTheorem}
If $M$ is a sensitive performance metric of a flow that is measured periodically with a period $\tau$, and $P$ is a rate metric of the flow, then there exists a constant $\CPl$ such that:
\begin{equation}
\label{eq:UncertaintyMain}
\DMT \cdot \DPM \geq \CPl
\end{equation}
\end{theorem}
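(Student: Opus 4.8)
The plan is to combine the three ingredients already at hand: the definition of a sensitive metric, the rate-metric lemma, and the observation that a fixed periodic measurement method expends a constant amount of overhead per period. First I would invoke the definition of a sensitive performance metric to write $\DMT = \CDMT \cdot \tau$, where $\CDMT$ is the method-specific constant from that definition. Next I would apply Lemma~\ref{RateMetricLemma} to obtain $\DPM \geq \Ov$, which relates the impact on the rate metric $P$ to the mean measurement overhead per time unit.

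The key intermediate step is to express $\Ov$ as a function of $\tau$. Since the measurement is periodic and the method is fixed, a constant amount of overhead --- call it $\theta$ bits --- is expended once per period $\tau$ (e.g., one probe packet for active measurement, one per-packet stamp for in-band measurement, one export for passive measurement; cf.\ Table~\ref{table:MeasurementClasses}). Hence the mean overhead per time unit is $\Ov = \theta / \tau$, so that $\tau \cdot \Ov = \theta$ is itself a method-specific constant, which I will denote $\CTO$.

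Putting these together, and using that all the quantities involved are non-negative,
\[
\DMT \cdot \DPM \;\geq\; (\CDMT \cdot \tau)\cdot \Ov \;=\; \CDMT \cdot (\tau \cdot \Ov) \;=\; \CDMT \cdot \CTO ,
\]
so the dependence on $\tau$ cancels exactly. Defining $\CPl := \CDMT \cdot \CTO$ then yields the claimed relation $\DMT \cdot \DPM \geq \CPl$, with $\CPl$ depending only on the measurement method, exactly as the ``network observer factor'' is meant to.

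The step I expect to require the most care is the middle one: arguing that $\tau \cdot \Ov$ is genuinely independent of $\tau$. This requires pinning down, uniformly across the passive/active/in-band classes, what ``the overhead charged to one period'' means --- in particular reconciling the per-time-unit versus per-packet accounting of Table~\ref{table:MeasurementClasses} --- and ensuring that the $\Ov$ appearing here is the same quantity for which Lemma~\ref{RateMetricLemma} gives $\DPM \geq \Ov$. Once that bookkeeping is settled, the remainder is just the elementary manipulation of the two displayed identities above.
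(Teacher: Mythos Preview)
Your proposal is correct and mirrors the paper's own proof almost verbatim: invoke sensitivity to get $\DMT=\CDMT\cdot\tau$, invoke Lemma~\ref{RateMetricLemma} to get $\DPM\geq\Ov$, argue that $\tau\cdot\Ov$ is a method-specific constant $\CTO$, and set $\CPl:=\CDMT\cdot\CTO$. The only difference is cosmetic --- you spell out $\Ov=\theta/\tau$ explicitly, whereas the paper simply asserts that a fixed measurement method makes $\tau\cdot\Ov$ constant.
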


\ifdefined\CutProofs
\else
\begin{proof}
Since $M$ is a sensitive performance metric, then by the definition of a sensitive metric there exists a constant $\CDMT$ such that $\DMT = \CDMT \cdot \tau$. Therefore $\DMT \cdot \DPM = \CDMT \cdot \tau \cdot \DPM$. By Lemma~\ref{RateMetricLemma} $\DPM \geq \Ov$, and thus $\CDMT \cdot \tau \cdot \DPM \geq \CDMT \cdot \tau \cdot \Ov$. 
%By Lemma~\ref{OverheadLemma} $\tau \cdot O = \CTO$, 
Since a fixed measurement method was assumed, with an overhead of $\Ov$ bits per time unit, it follows that $\tau \cdot \Ov$ is constant; we denote this constant by $\CTO$. Thus, $\CDMT \cdot \tau \cdot \Ov = \CDMT \cdot \CTO$. We define $\CPl$ to be $\CDMT \cdot \CTO$, and thus obtain $\DMT \cdot \DPM \geq \CPl$.
\end{proof}
\fi

\begin{figure*}[htbp]
  \centering
  \begin{subfigure}[t]{.33\textwidth}
  \centering
  \fbox{\includegraphics[height=6.5\grafflecm]{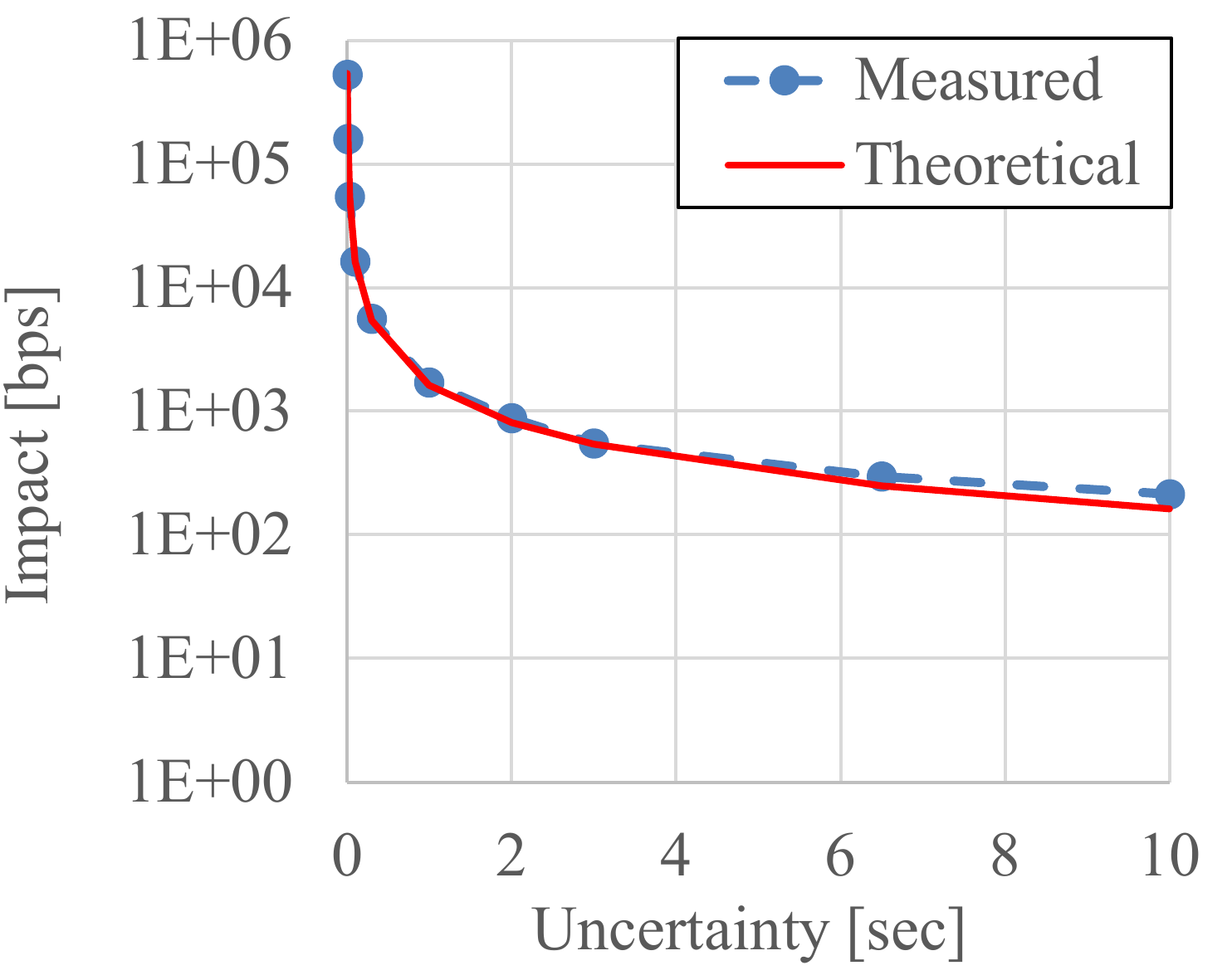}}
  \captionsetup{justification=centering}
  \caption{Passive measurement \\ and exporting using gNMI.}
  \label{fig:gnmi}
  \end{subfigure}%
  \begin{subfigure}[t]{.33\textwidth}
  \centering
  \fbox{\includegraphics[height=6.5\grafflecm]{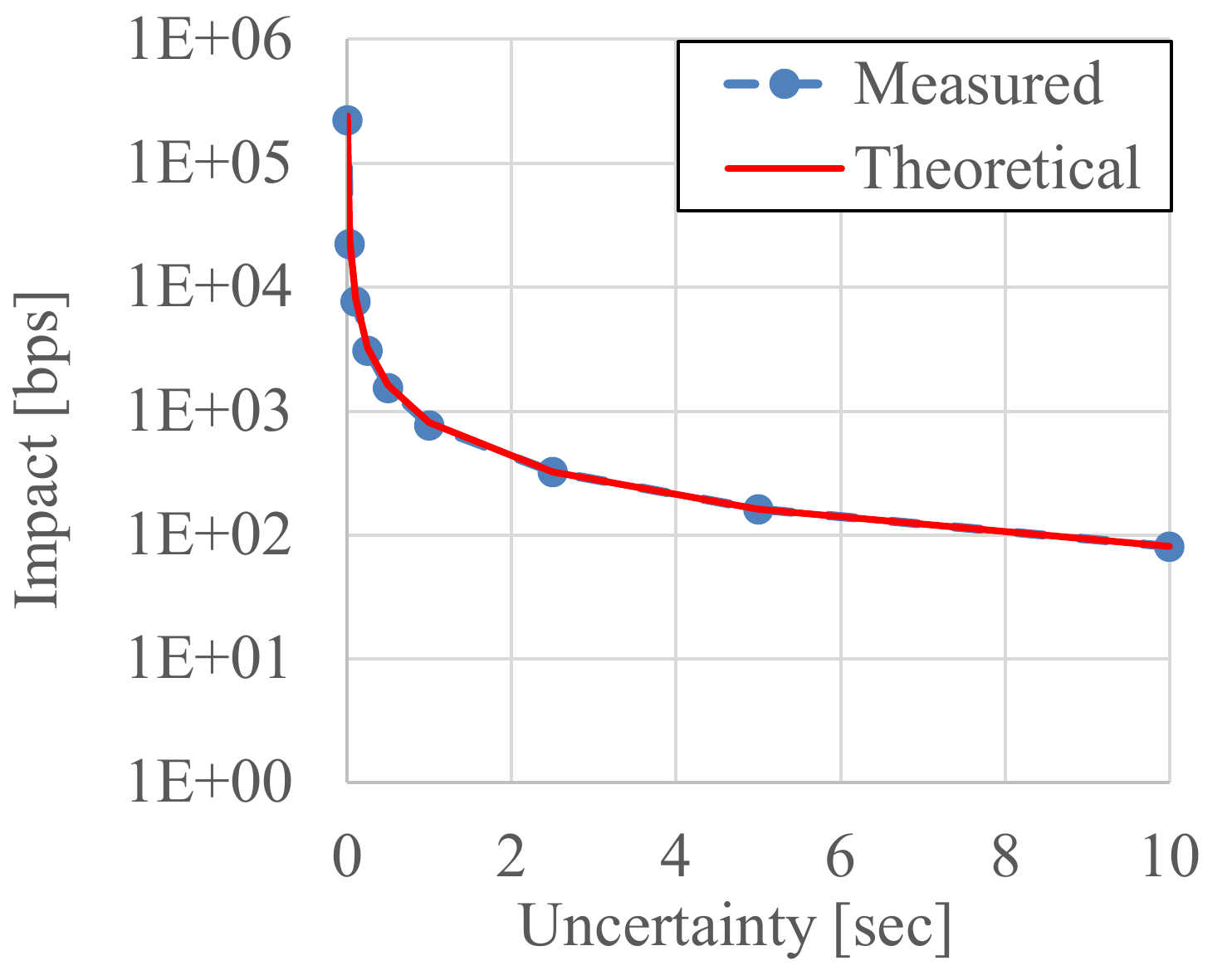}}
  \captionsetup{justification=centering}
  \caption{Active measurement using CCM.}
  \label{fig:ccm}
  \end{subfigure}%
  \begin{subfigure}[t]{.33\textwidth}
  \centering
  \fbox{\includegraphics[height=6.5\grafflecm]{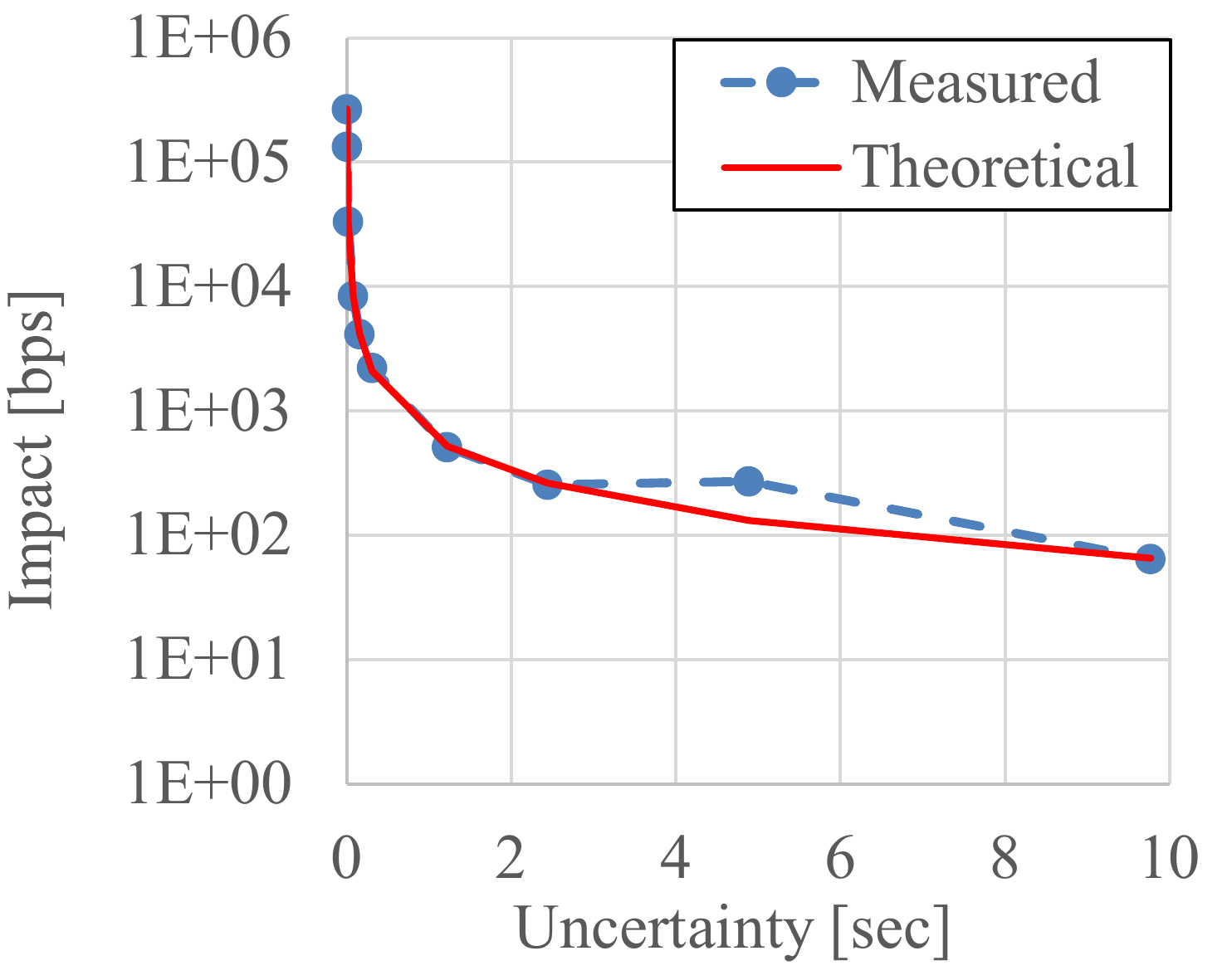}}
  \captionsetup{justification=centering}
  \caption{In situ measurement using IOAM.}
  \label{fig:ioam}
  \end{subfigure}%
  \caption{The uncertainty relation (Theorem~\ref{UncertaintyTheorem}) in practice: the measurement impact vs. the measurement uncertainty. For each of the three measurement classes the experimental result is compared to the theoretical result (predicted by the uncertainty relation).}
  \ifdefined\CutSpace\vspace{-3mm}\fi
  \label{fig:overhead}
\end{figure*}

\ifdefined\SingleExample
To understand the importance of Theorem~\ref{UncertaintyTheorem} we present the following example.

\textbf{Example 1.} Continuity Check Messages (CCM) are used in Ethernet OAM~\cite{IEEE802.1ag,Y1731} in order verify the continuity of an Ethernet link and to detect failures. CCMs are sent periodically, and a failure is reported when a CCM has not arrived within a given timeout. If we define the measured metric $M$ to be the detection time of a failure, then the frequency of the CCMs determines the uncertainty in the detection time $M$. The tradeoff between the uncertainty in the detection time and the impact of the measurement on the rate is captured in Theorem~\ref{UncertaintyTheorem}. In this case $\DMT=\tau$, and it is easy to see that $\CPl$ is equal to the number of overhead bits per period $\tau$.

\else

\ifdefined\AddSpace\vspace{2mm}\fi
\subsection{Understanding the Network Uncertainty Relation}
\ifdefined\AddSpace\vspace{1mm}\fi

In order to understand the impact of Theorem~\ref{UncertaintyTheorem} we present three examples, in the context of the three measurement classes that were discussed in Section~\ref{ClassSec}.

\textbf{Example 1.} Continuity Check Messages (CCM) are used in Ethernet OAM~\cite{IEEE802.1ag,Y1731} in order verify the continuity of an Ethernet link and to detect failures. CCMs are sent periodically, and a failure is reported when a CCM has not arrived within a given timeout. If we define the measured metric $M$ to be the detection time of a failure, then the frequency of the CCMs determines the uncertainty in the detection time $M$. Once again, the tradeoff between the uncertainty in the detection time and the impact of the measurement on the rate is captured in Theorem~\ref{UncertaintyTheorem}. In this case $\DMT=\tau$, and it is easy to see that $\CPl$ is equal to the number of overhead bits per period $\tau$.

\textbf{Example 2.} Consider a passive measurement process, in which a performance metric $M$ such as a flow counter is measured by a network device and periodically exported to a collector node. The measurement period $\tau$ is an indication of the freshness of the information that the collector obtains from the measured device; the data is fresh immediately after the measurement, but any change in the measured metric after the measurement is only known to the collector upon receiving the next measurement. Thus, $\DMT$ represents the maximal difference between two consecutive measurement values of $M$, and is directly proportional to $\tau$. Following Theorem~\ref{UncertaintyTheorem}, any reduction in the uncertainty of $M$ will result in an increase in the impact on the loss and/or data rate.

\textbf{Example 3.} Consider a flow in which IOAM is used for monitoring the network path of the flow. A collector is used to monitor the IOAM data, and detect when the network path changes. The metric $M$ in this example refers to the number of packets that were forwarded in the flow before a path change occurred. As IOAM may be applied to all packets of the flow or to a subset of the packets, we can define~$\tau$ to be the mean period between two packets that are monitored by IOAM. Thus, by Theorem~\ref{UncertaintyTheorem}, there is a tradeoff between the metric~$M$ and the impact on the flow rate.

\fi

\ifdefined\AddSpace\vspace{2mm}\fi
\subsection{A Concrete Observer Factor}
\ifdefined\AddSpace\vspace{1mm}\fi
The network uncertainty relation (Eq.~\ref{eq:UncertaintyMain}) uses the constant $\CPl$, representing the tradeoff between the measurement's uncertainty and impact. This constant is specific to the measurement method, in contrast to $\hbar$ in Eq.~\ref{eq:HeisenbergUncertainty}, which is a universal constant. In Example~1 the factor $\CPl$ has an intuitive meaning and can be easily computed; it is simply the number of overhead bits per measurement period. The following lemma reflects this point. As in Example~1, we analyze the uncertainty in the \emph{detection time}, $T$, of a failure or anomaly.

\begin{sloppypar}
\begin{lemma}
\label{ConstantLemma}
Let $T(\tau)$ be the detection time in a periodic measurement with a period $\tau$, and let $P$ be a rate metric. Then ${\Delta T(\tau) \cdot \Delta P_T (\tau) \geq \CPl}$, where $\CPl$ is the number of overhead bits per period $\tau$ used by the measurement. 
\end{lemma}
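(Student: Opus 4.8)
The plan is to obtain Lemma~\ref{ConstantLemma} as a direct specialization of Theorem~\ref{UncertaintyTheorem} to the measured metric $M = T$, the only real work being to check that $T$ is a sensitive performance metric with the appropriate constant and then to unwind the definition of $\CPl$ produced by the proof of Theorem~\ref{UncertaintyTheorem} in this concrete setting. First I would argue that $T$ is sensitive with $\CDMT = 1$, i.e.\ that $\Delta T(\tau) = \tau$. Suppose a failure or anomaly occurs at some time $t$ strictly between two consecutive measurement instants. Because the measurement is periodic with period $\tau$, the next measurement instant occurs no later than $t + \tau$, and the anomaly is reported then; hence the detection time, viewed as a function of the anomaly's occurrence time, can differ by at most $\tau$ over the window $(t, t+\tau]$, so $\Delta T(\tau) \le \tau$. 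The worst case — an anomaly occurring immediately after a measurement instant — is detected only at the following measurement, exactly $\tau$ later, so the bound is tight and $\Delta T(\tau) = \tau = \CDMT \cdot \tau$ with $\CDMT = 1$. Thus $T$ fits the definition of a sensitive performance metric.

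Next, since $T$ is sensitive and $P$ is a rate metric of the flow, Theorem~\ref{UncertaintyTheorem} applies verbatim with $M = T$ and yields a constant $\CPl$ with $\Delta T(\tau) \cdot \Delta P_T(\tau) \ge \CPl$. It then remains only to identify $\CPl$. Following the proof of Theorem~\ref{UncertaintyTheorem}, $\CPl = \CDMT \cdot \CTO$, where $\CTO = \tau \cdot \Ov$ is the number of overhead bits consumed in a single measurement period: a periodic method with mean overhead $\Ov$ bits per time unit uses exactly $\tau \cdot \Ov$ overhead bits in each period $\tau$, so $\CTO$ is a constant independent of $\tau$. Substituting $\CDMT = 1$ gives $\CPl = \CTO = \tau \cdot \Ov$, which is precisely the number of overhead bits per period $\tau$, as claimed. (The appeal to Lemma~\ref{RateMetricLemma} inside Theorem~\ref{UncertaintyTheorem} supplies the step $\Delta P_T(\tau) \ge \Ov$, which is exactly where the overhead enters the bound.)

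The main obstacle I anticipate is bookkeeping rather than a genuine mathematical difficulty: one must phrase ``detection time'' so that it cleanly matches the formal definition of the uncertainty $\DMT$ of a measured metric — the anomaly's occurrence time plays the role of the reference time $t$ and the detection event plays the role of $t'$ — and one must ensure that the same period $\tau$ and the same flow are used throughout, so that the overhead $\Ov$, and hence the constant $\CTO = \tau \cdot \Ov$, is well defined. Once these points are pinned down, the claimed inequality is immediate from the already-established Theorem~\ref{UncertaintyTheorem}, and the closed-form value of $\CPl$ drops out of its proof.
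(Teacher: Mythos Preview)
Your proposal is correct and essentially matches the paper's proof: both arguments establish $\Delta T(\tau)=\tau$ from the worst-case detection delay and then invoke Lemma~\ref{RateMetricLemma} to get $\Delta P_T(\tau)\ge\Ov$, yielding $\Delta T(\tau)\cdot\Delta P_T(\tau)\ge\tau\cdot\Ov=\CPl$. The only cosmetic difference is that the paper does this two-line computation directly, whereas you route it through Theorem~\ref{UncertaintyTheorem} and then unwind that theorem's proof to recover the same constant.
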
 
\end{sloppypar}

\ifdefined\CutProofs
\else
\begin{proof}
Since the measurement is periodic with a period $\tau$, a failure or anomaly is detected at most $\tau$ time units after it occurs, and thus $\Delta T(\tau) = \tau$. 
%By the definition of the impact $\DPM$, it is the minimal upper bound on the difference between the value of the metric $P$ with and without the measurement. 
By Lemma~\ref{RateMetricLemma} we have that $\Delta P_T (\tau) \geq \Ov$. Thus, $\Delta T(\tau) \cdot \Delta P_T (\tau) \geq \tau \cdot \Ov$. Note that $\tau \cdot \Ov$ is the number of overhead bits per period, and we denote it by $\CPl$, yielding ${\Delta T(\tau) \cdot \Delta P_T (\tau) \geq \CPl}$.
\end{proof}
\fi

\ifdefined\LongVersion
\subsection{Scaling the Observer Effect}
An important question that arises with respect to the observer effect is whether it is still relevant at high scales. For example, one may argue that the impact of a periodic active measurement method, such as the CCM protocol, may be significant in low-bandwidth networks, but becomes insignificant in large-scale networks. We will show that in a precise sense, the measurement impact scales with the size of the network.

%A key question about detection measurement is: does the desired uncertainty (or detection time), $\Delta T(\tau)$, scale with the network size?
At a first glance it may seem that the desired uncertainty (or detection time), $\Delta T(\tau)$, does not scale with the network size.
The best-known requirement in telecom networks is the ability to recover from a failure within 50~milliseconds. This requirement has not changed in many years, as it is derived from the human ability to detect downtimes in voice calls. However, in large-scale data center networks much faster detection times may be required. For example, if we consider a high-bandwidth transaction between servers over a 100~Gbps network interface, a 50~millisecond detection time yields 5~Gigabits of data, which may be lost before an error is detected. Thus, the measurement uncertainty requirements become increasing more stringent as networks scale.
Moreover, even if the desired uncertainty is a fixed requirement that is determined by the application (as in the 50~millisecond example), we expect the measurement overhead and impact to scale with the number of flows.

This intuition is formalized in the following lemma:
%In order to formalize this intuition, we focus on large-scale networks that handle traditional applications, and present Lemma~\ref{FlowLemma}. 

\begin{sloppypar}
\begin{lemma}
\label{FlowLemma}
Let $T(\tau)$ be the detection time in a periodic measurement with a period $\tau$ that is performed for $N$ flows, and let $P$ be a rate metric. Then ${\Delta T(\tau) \cdot \sum_{i=0}^{N-1} \Delta {P_T}_i (\tau) \geq N \cdot \CPl}$, where $\CPl$ is the number of overhead bits per period $\tau$. 
\end{lemma}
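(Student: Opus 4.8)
The plan is to mirror the proof of Lemma~\ref{ConstantLemma}, lifting it from a single flow to $N$ flows. The first observation is that the detection time is governed solely by the measurement period: since the measurement is performed periodically with period $\tau$, a failure or anomaly in \emph{any} of the $N$ flows is detected at most $\tau$ time units after it occurs, so $\Delta T(\tau) = \tau$ exactly as in the single-flow case. This common factor is shared across all flows, and it is precisely what produces the multiplicative $N$ on the right-hand side.

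Next I would apply Lemma~\ref{RateMetricLemma} separately to each flow. For flow $i$, let $\Theta_i$ denote the mean measurement overhead per time unit incurred on that flow; then Lemma~\ref{RateMetricLemma} gives $\Delta {P_T}_i (\tau) \geq \Theta_i$. Summing over $i = 0, \dots, N-1$ yields $\sum_{i=0}^{N-1} \Delta {P_T}_i (\tau) \geq \sum_{i=0}^{N-1} \Theta_i$. Because the same fixed measurement method is applied to each flow, every flow contributes the same per-period overhead, i.e.\ $\tau \cdot \Theta_i = \CPl$ for each $i$, so $\sum_{i=0}^{N-1} \Theta_i = N \CPl / \tau$. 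Multiplying both sides by $\Delta T(\tau) = \tau$ gives $\Delta T(\tau) \cdot \sum_{i=0}^{N-1} \Delta {P_T}_i (\tau) \geq N \CPl$, as claimed.

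The only real subtlety — and the step I would be most careful about — is the additivity of the overhead across flows in the second step: one must argue that measuring $N$ flows costs at least $N$ times the per-flow overhead, i.e.\ that the per-flow impacts $\Delta {P_T}_i (\tau)$ cannot ``share'' overhead in a way that lets their sum fall below $\sum_i \Theta_i$. This follows because each $\Delta {P_T}_i (\tau)$ is defined relative to flow $i$ in isolation (flow $i$'s rate metric with the measurement versus without it), and the overhead $\Theta_i$ that the method injects onto flow $i$ is present regardless of what is done to the other flows, so Lemma~\ref{RateMetricLemma} applies flow-by-flow and the bounds add. Everything else is the same bookkeeping as in Lemma~\ref{ConstantLemma}.
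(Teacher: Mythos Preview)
Your proof is correct and follows essentially the same route as the paper's. The only cosmetic difference is that the paper distributes $\Delta T(\tau)$ into the sum and then invokes Lemma~\ref{ConstantLemma} directly on each term $\Delta T(\tau)\cdot\Delta{P_T}_i(\tau)\geq\CPl$, whereas you unpack that lemma's proof (via $\Delta T(\tau)=\tau$ and Lemma~\ref{RateMetricLemma}) before summing; the logical content is identical.
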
 

\begin{proof}
We observe that \\ 
${\Delta T(\tau) \cdot \sum_{i=0}^{N-1} \Delta {P_T}_i (\tau) = \sum_{i=0}^{N-1} \Delta T(\tau) \cdot \Delta {P_T}_i (\tau)}$. 
By Theorem~\ref{ConstantLemma} we obtain that \\ 
${\sum_{i=0}^{N-1} \Delta T(\tau) \cdot \Delta {P_T}_i (\tau) \geq \sum_{i=0}^{N-1} \CPl = N \cdot \CPl}$. It follows that ${\Delta T(\tau) \cdot \sum_{i=0}^{N-1} \Delta {P_T}_i (\tau) \geq N \cdot \CPl}$.
\end{proof}
\end{sloppypar}

This lemma provides an important insight about the scaling of the observer effect: for a fixed detection time, the impact scales with the size of the network.
\fi

%\ifdefined\AddSpace\vspace{2mm}\fi
\section{Evaluation}
\label{EvalSec}
\ifdefined\AddSpace\vspace{1mm}\fi

We now present an experimental evaluation of the network observer effect for each of the three measurement classes of Section~\ref{AnalyzingSec}: passive, active and \inband. We evaluate three significantly different protocols, gNMI, CCM and IOAM, in three different open source environments. 

\ifdefined\BlindReview
All of our evaluation results can be replicated using the detailed instructions and code in~\cite{git-oe-blind}.
\else
All of our evaluation results can be replicated using the detailed instructions and code in~\cite{git-oe}.
\fi

To allow for an `apples-to-apples' comparison, despite the inherent differences between the three classes (Section~\ref{ClassSec}), we assume that the measurement overhead is carried over the data path.\footnote{Specifically, in passive measurement this refers to the case in which the measurement data is exported along the same path as the data itself. Although our analysis does not mandate this assumption, it is convenient for an apples-to-apples comparison.} Specifically, in the IOAM measurement we assume that data flows have a constant data rate of 1 Mbps, using a packet length of $360$ bytes.\footnote{This is the average packet length in a simple IMIX~\cite{IMIX}, including IPv6 and L2 headers.} This assumption is specifically relevant to IOAM, where measurement overhead is a function of the data rate.

\ifdefined\AddSpace\vspace{2mm}\fi
\subsection{Experiment Setup}
\label{ExperimentSetupSec}
\ifdefined\AddSpace\vspace{1mm}\fi

Three experimental environments were used in the evaluation:

\textbf{gNMI.} The passive measurement experiment setup was based on an open source environment running a Stratum/BMv2 switch~\cite{stratum}, emulated in Mininet. Two hosts were connected through the switch, and a counter was used to monitor the traffic through each of the switch's interfaces. gNMI~\cite{gnmi} was used for periodically exporting the value of one of the counters from the switch; we used the gNMI client to subscribe to periodic updates from the switch, and varied the export interval from a few milliseconds to a few seconds. The impact (management overhead) of this telemetry stream was analyzed. 

\textbf{CCM.} The active measurement setup was based on an open source implementation~\cite{dot1ag-utils}
%\footnote{In the course of our evaluation we found a bug in the original code and submitted a pull request~\cite{ccm-pull} to fix it.} 
of the IEEE 802.1ag standard~\cite{IEEE802.1ag} for Ethernet Connectivity Fault Management. The active measurement procedure was performed by periodic Continuity Check Messages (CCM) that were sent between two hosts in a Mininet environment. We evaluated the measurement impact for various CCM interval values.

\textbf{IOAM.} We tested \Inband\ measurement using an open source implementation of IOAM in IPv6~\cite{IOAM,ioam-ipv6}. Traffic was sent between two hosts and forwarded along three hops of switches: the switch pushed an IPv6 tunnel with the IOAM encapsulation, the second was used as an IOAM transit switch and pushed IOAM data into transit packets, and the third pushed its own IOAM data and then removed the IPv6 tunnel and IOAM encapsulation. The two hosts and three switches were emulated by five Linux Containers (LXC).

\ifdefined\AddSpace\vspace{1mm}\fi
\subsection{The Impact-Uncertainty Tradeoff}
\ifdefined\AddSpace\vspace{1mm}\fi

In each of the three experiments we focused on the \emph{data rate} as the metric $P$. Assuming an overprovisioned network, the impact $\DPM$ is the difference between the data rate including the measurement and the data rate without the measurement. Thus, the impact is equal to the measurement overhead (this statement is confirmed by the evaluation in the next subsection). We define $\DMT$ to be the uncertainty in the detection time of a failure or anomaly, and by Lemma~\ref{ConstantLemma} we have that $\DMT=\tau$, and the constant $\CPl$ is equal to the number of overhead bits per period $\tau$.

We ran the experiments with various values of $\tau$. In gNMI the period $\tau$ is determined by the exporting interval, and similarly the CCM interval determines $\tau$ in the active case. In IOAM we varied the sampling ratio, which is the fraction of data traffic that is monitored by IOAM. Since the data rate in the experiment is constant, the sampling ratio determines the measurement interval $\tau$.

The experimental results are presented in Fig.~\ref{fig:overhead}. In each measurement setup we measured the impact $\DPM$ as a function of the uncertainty $\DMT$. Each graph also depicts the theoretical 
%(computed) 
curve $\DPM = \CPl / \DMT$, illustrating the uncertainty relation (Eq.~\ref{eq:UncertaintyMain}).

The theoretical value of $\CPl$ is the expected overhead per period of each of the protocols. In gNMI, a telemetry message of $\CPl (gNMI)=204$ bytes was exported by the switch in each period. Each CCM is $\CPl (CCM)=101$ bytes long. IOAM was used over a three-hop network, where each hop pushed an overhead of 8~bytes, with a total of $\CPl (IOAM)=80$ bytes including the IPv6 tunnel header (44~bytes), the IPv6 option header (4~bytes) and the IOAM header (8~bytes), which were added by the encapsulating switch.

%Figure TBD illustrates the measurement impact as a function of the uncertainty $\DMT$ in each of the measurement classes.

The results confirm the uncertainty relation of Theorem~\ref{UncertaintyTheorem}. In the experiments, Eq.~\ref{eq:UncertaintyMain} was in fact an equality. As mentioned in Section~\ref{AnalyzingSec}, it is expected that in some cases the uncertainty $\DMT$ will be greater (satisfying the `$\geq$' in the equation) due to other factors that are not related to the measurement overhead.

\ifdefined\AddSpace\vspace{1mm}\fi
\subsection{Evaluating the Measurement Impact}
\ifdefined\AddSpace\vspace{1mm}\fi

In this experiment we evaluated the measurement impact $\DPM$ as a function of the computed measurement overhead $\Ov$. We used the IOAM setup, and varied the overhead per time unit by changing the IOAM sampling ratio. Two scenarios were tested: (a) the network is overprovisioned and the measurement impacts the data rate without loss impact, and (b) a 1~Mbps flow is forwarded over a 1~Mbps link without overprovisioning, thus causing packet loss.

\begin{figure}[!t]
  \centering
  \begin{subfigure}[t]{.24\textwidth}
  \centering
  \fbox{\includegraphics[height=4.8\grafflecm]{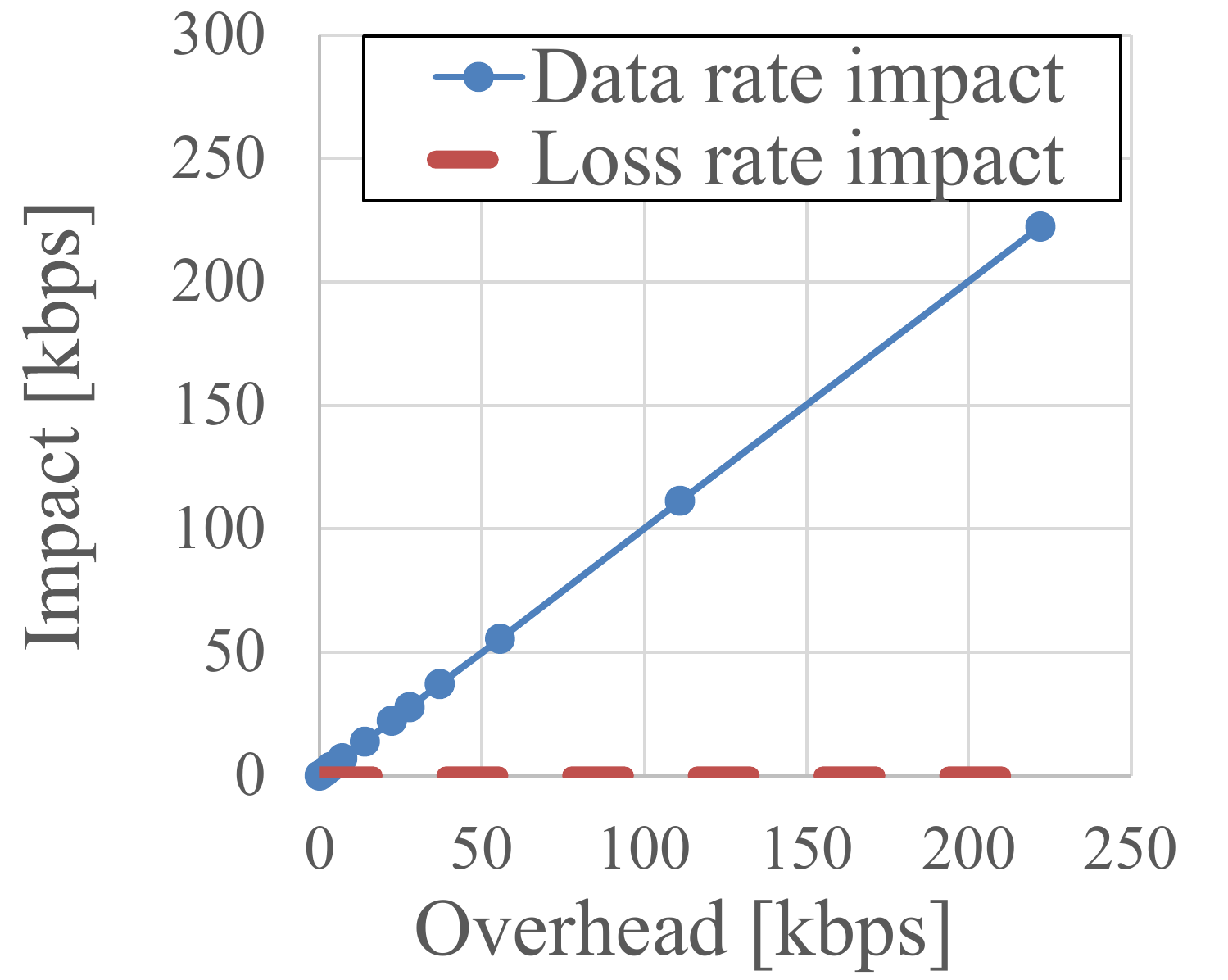}}
  \captionsetup{justification=centering}
  \ifdefined\CutSpace
	\caption{Overprovisioned link.}
	\else
	\caption{The measurement impact vs. the measurement overhead in an overprovisioned link.}
	\fi
  \label{fig:data}
  \end{subfigure}%
  \begin{subfigure}[t]{.24\textwidth}
  \centering
  \fbox{\includegraphics[height=4.8\grafflecm]{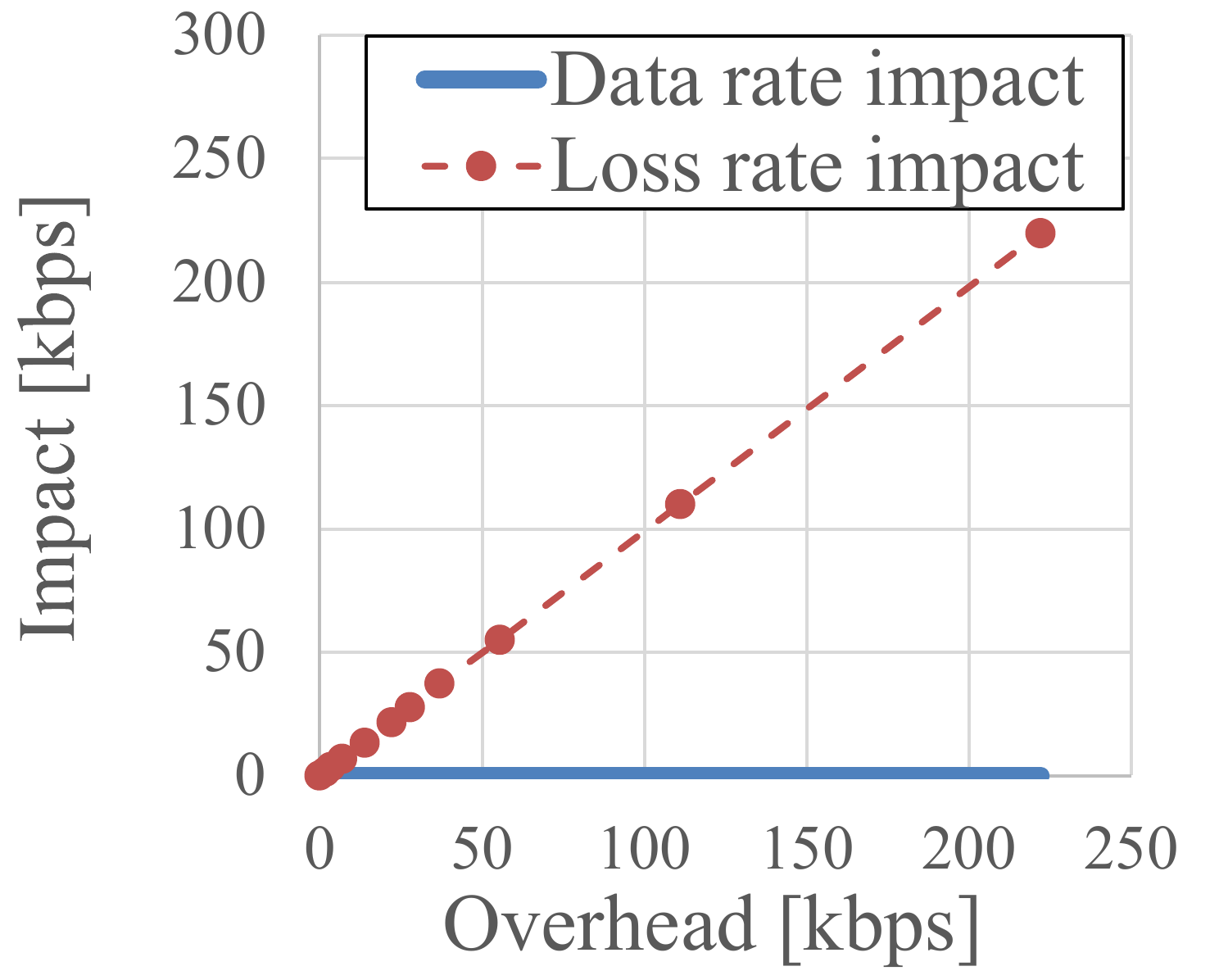}}
  \captionsetup{justification=centering}
  \ifdefined\CutSpace
	\caption{Without overprovisioning.}
	\else
  \caption{The measurement impact vs. the measurement overhead without overprovisioning.}
	\fi
  \label{fig:loss}
  \end{subfigure}%
  \ifdefined\CutSpace\vspace{-3mm}\fi
  \caption{The correlation between impact and overhead.}
  %\ifdefined\CutSpace\vspace{-6mm}\fi
	\vspace{-2mm}
  \label{fig:dataloss}
\end{figure}

The experimental results of Fig.~\ref{fig:dataloss} confirm the connection between the measurement impact and the overhead, stated in Lemma~\ref{RateMetricLemma}. Notably, our definition of \emph{impact} captures the cost of the measurement whether the network is overprovisioned or not; in the overprovisioned case the measurement impacts the data rate, whereas without overprovisioning the measurement impacts the loss rate.

\ifdefined\LongVersion

\begin{figure*}[!t]
  \centering
  \begin{subfigure}[t]{.33\textwidth}
  \centering
  \fbox{\includegraphics[height=6.5\grafflecm]{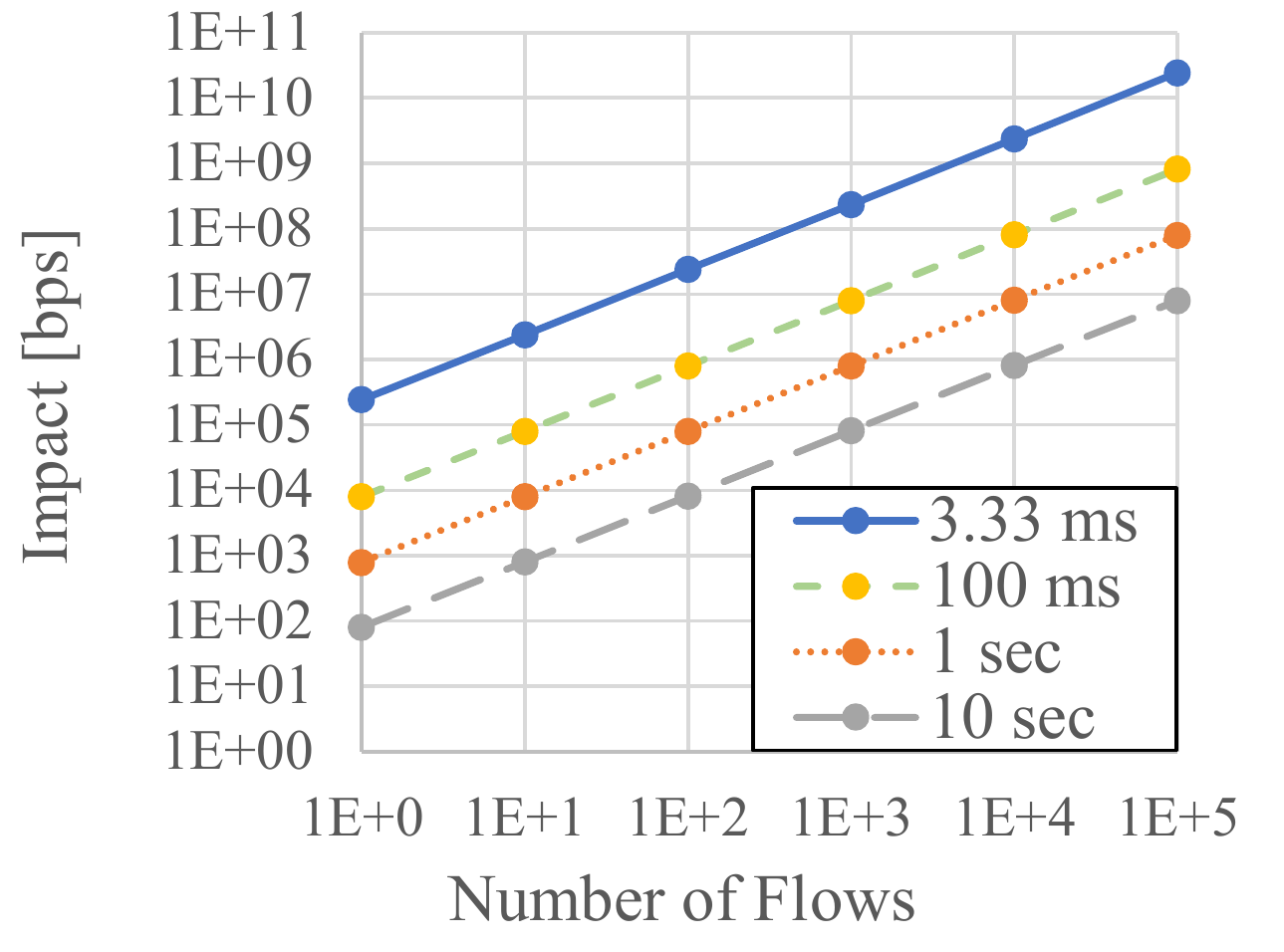}}
  \captionsetup{justification=centering}
  \caption{The impact as a function of the number \\ of flows (simulated). Each curve represents \\ a different measurement period.}
  \label{fig:ActiveImpactSim}
  \end{subfigure}%
  \begin{subfigure}[t]{.33\textwidth}
  \centering
  \fbox{\includegraphics[height=6.5\grafflecm]{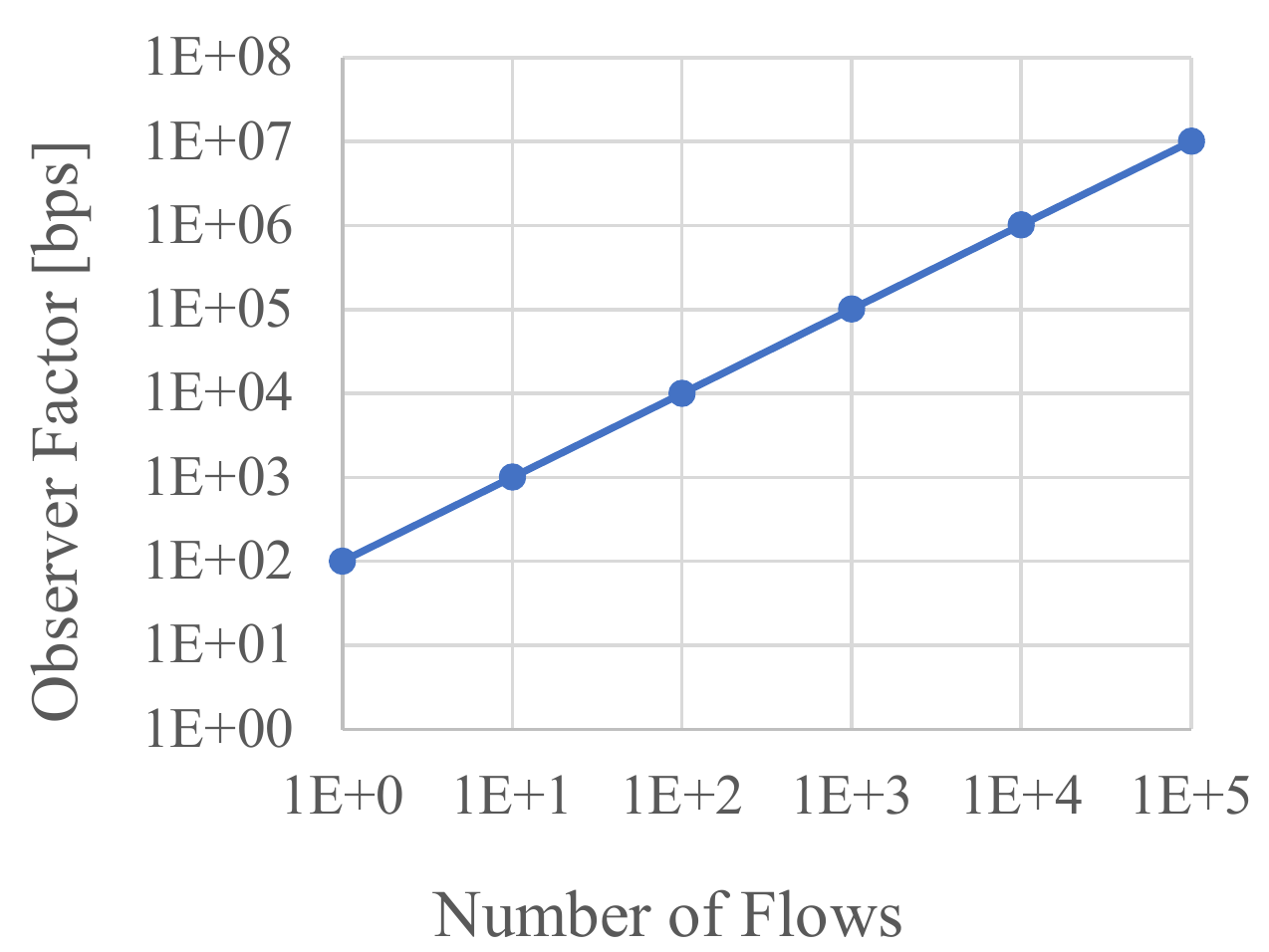}}
  \captionsetup{justification=centering}
  \caption{The observer factor as a function of the number of flows (computed).}
  \label{fig:FactorNoFlows}
  \end{subfigure}%
  \begin{subfigure}[t]{.33\textwidth}
  \centering
  \fbox{\includegraphics[height=6.5\grafflecm]{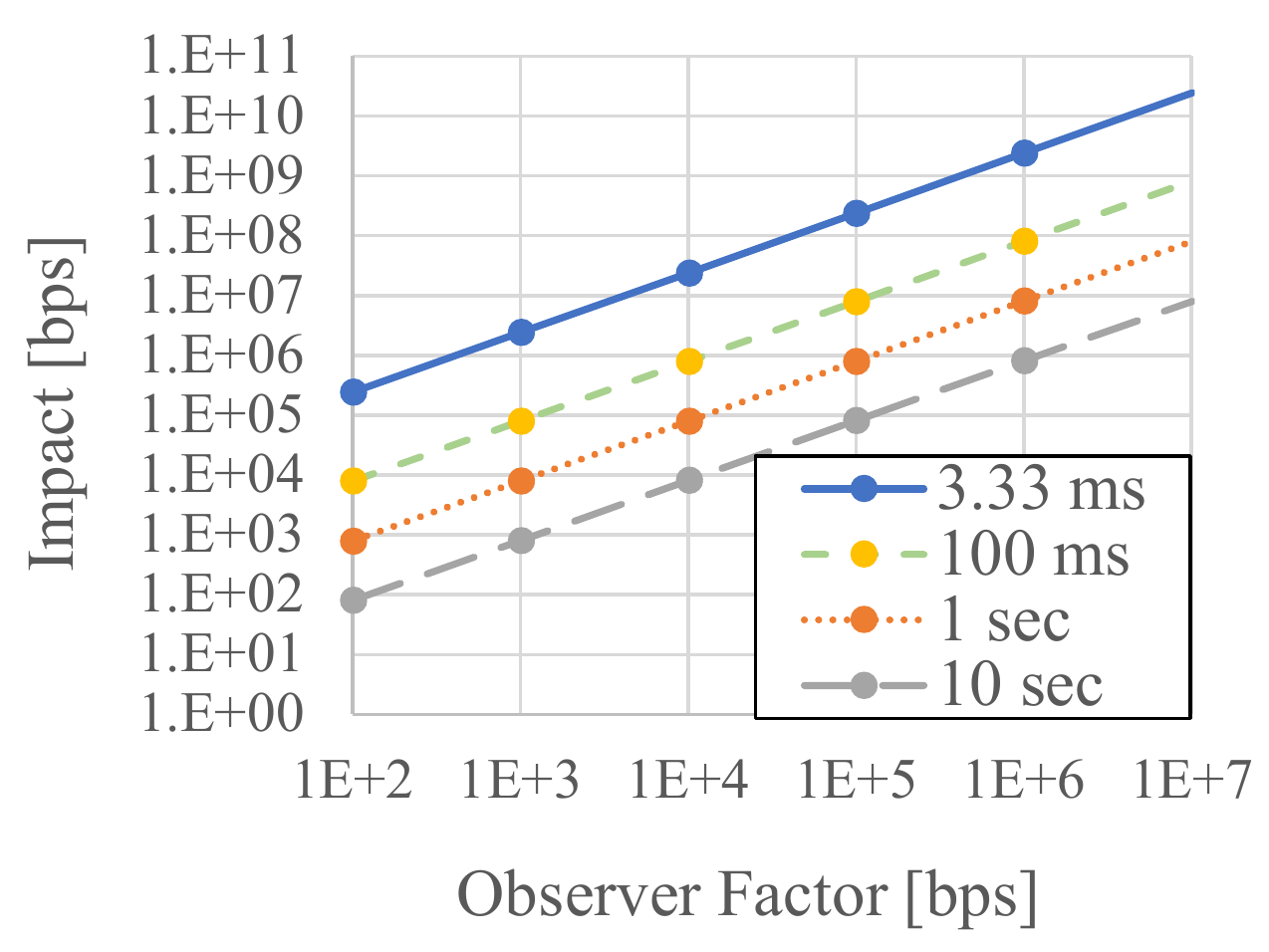}}
  \captionsetup{justification=centering}
  \caption{The impact as a function of the observer \\ factor (simulated). Each curve represents \\ a different measurement period.}
  \label{fig:ActiveImpactFactor}
  \end{subfigure}%
  \caption{The scaling of the observer effect as a function of the number of flows. Simulated for active measurement (CCM).}
  \ifdefined\CutSpace\vspace{-3mm}\fi
  \label{fig:scale}
\end{figure*}

\ifdefined\AddSpace\vspace{2mm}\fi
\subsection{Evaluating the Observer Effect Scaling}
\label{EvalScaling}
\ifdefined\AddSpace\vspace{1mm}\fi

In order to evaluate the observer effect at large scales we used a simulation environment. The simulation was implemented in Visual Basic, and its purpose was to evaluate the uncertainty relation and specifically the measurement impact in a network with a high data rate, and with a large number of flows. The advantage of a software-based simulation is that it allows analysis of large scales, as opposed to the emulation environments of the previous subsections, which ran in a virtualized environment on a conventional PC, and therefore allow up to tens of thousands of packets per second. 

The simulation environment mimicked the three scenarios of Section~\ref{ExperimentSetupSec}, but at a larger scale. The CCM scenario was simulated with a variable number of flows, between $1$ and $100,000$,\footnote{Supporting 10s of thousands of monitored flows on a single device is a reasonable use case in carrier networks~\cite{EricssonSPO1400,Huawei}, in which the CCM is used.} so that each flow was independently monitored by the CCM protocol. For each scale the simulation was repeated with different measurement period values: $3.33~ms$, $100~ms$, $1~sec$, and $10~sec$. The gNMI scenario was similarly simulated for a variable number of flows and with various values of measurement periods. The in-situ measurement was simulated at various data rates\footnote{The scaling factor of in-situ measurement is not the number of flows, since in-situ measurement uses per-packet overhead (subject to the sampling ratio). Instead, the scaling factor here was the data rate of the monitored data flow.} from $1~Mbps$ to $100~Gbps$, and the sampling ratio was varied from $1$ to $100$.

Fig.~\ref{fig:scale} presents simulation results that demonstrate the scaling behavior of Lemma~\ref{FlowLemma}. These results were produced from the active measurement (CCM) scenario. Using the notations of Lemma~\ref{FlowLemma}, Fig.~\ref{fig:ActiveImpactSim} depicts the impact, which is $\sum_{i=0}^{N-1} \Delta {P_T}_i (\tau)$, as a function of the number of flows $N$. Similar simulation runs were performed in the gNMI scenario and in the in-situ scenario, producing similar scaling behavior, as shown in Fig.~\ref{fig:scalePassiveInsitu}.

Fig.~\ref{fig:FactorNoFlows} illustrates the computed observer factor in the CCM scenario, $N \cdot \CPl$, as a function of the number of flows $N$. A comparison to Fig.~\ref{fig:ActiveImpactSim} demonstrates the correlation between the (simulated) impact and the (computed) observer factor. This correlation is then confirmed in Fig~\ref{fig:ActiveImpactFactor}, which shows the impact as a function of the observer 
\ifdefined\CutSpaceInfocom
factor.
\else
factor, demonstrating the uncertainty relation of Lemma~\ref{FlowLemma}.
\fi

These simulation results confirm the scaling behavior of the observer effect: the measurement impact scales with the size of the network. 
\ifdefined\CutSpaceInfocom
\else
Specifically, the simulations confirm the measurement impact scaling that lies in the uncertainty relation of Lemma~\ref{FlowLemma}.
\fi

\begin{figure}[htbp]
  \centering
  \begin{subfigure}[t]{.24\textwidth}
  \centering
  \fbox{\includegraphics[height=4.8\grafflecm]{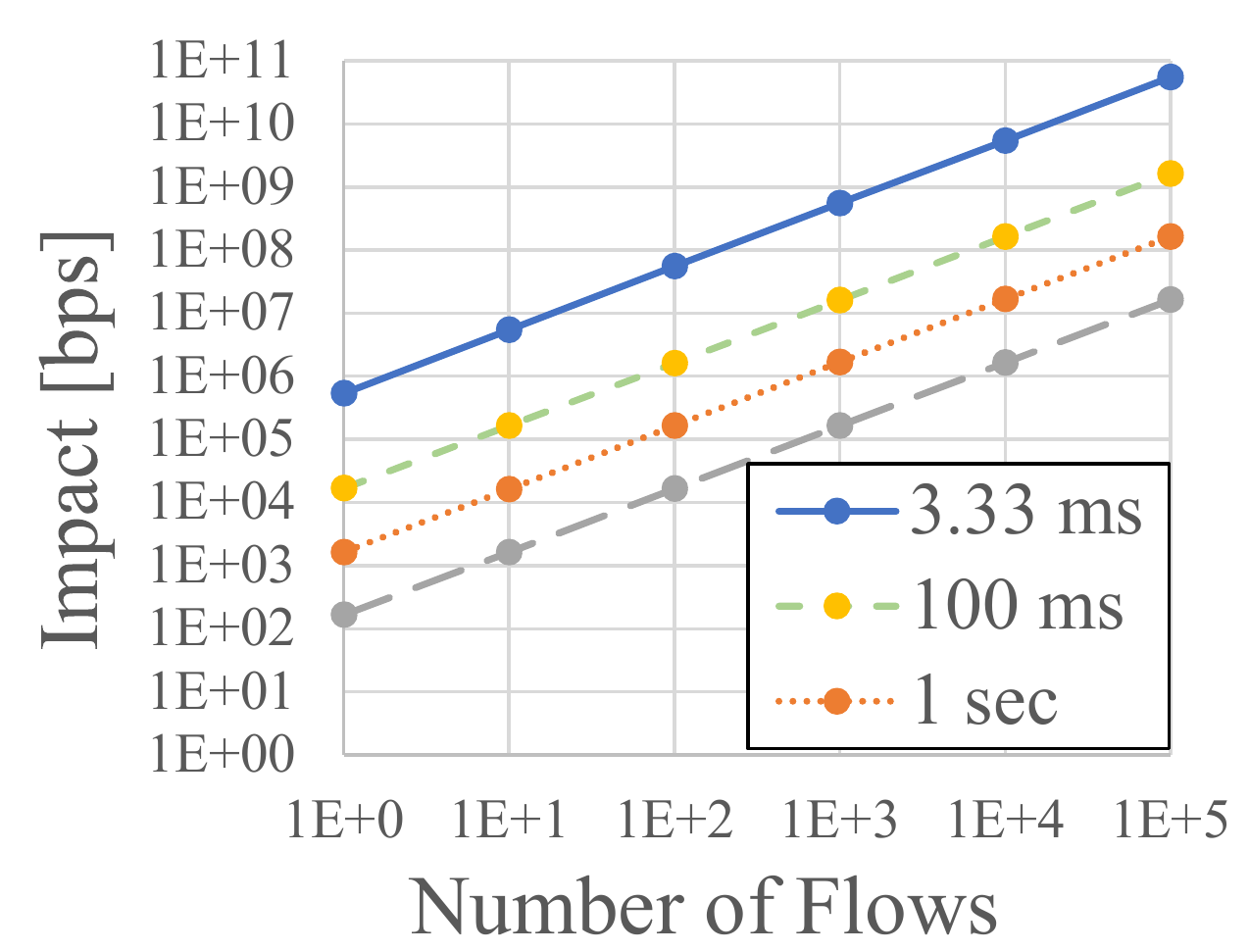}}
  \captionsetup{justification=centering}
\ifdefined\CutSpaceInfocom
  \caption{Passive measurement.}
\else
  \caption{Passive measurement impact as a function of the number of flows (simulated).}
\fi
  \label{fig:PassiveSim}
  \end{subfigure}%
  \begin{subfigure}[t]{.24\textwidth}
  \centering
  \fbox{\includegraphics[height=4.8\grafflecm]{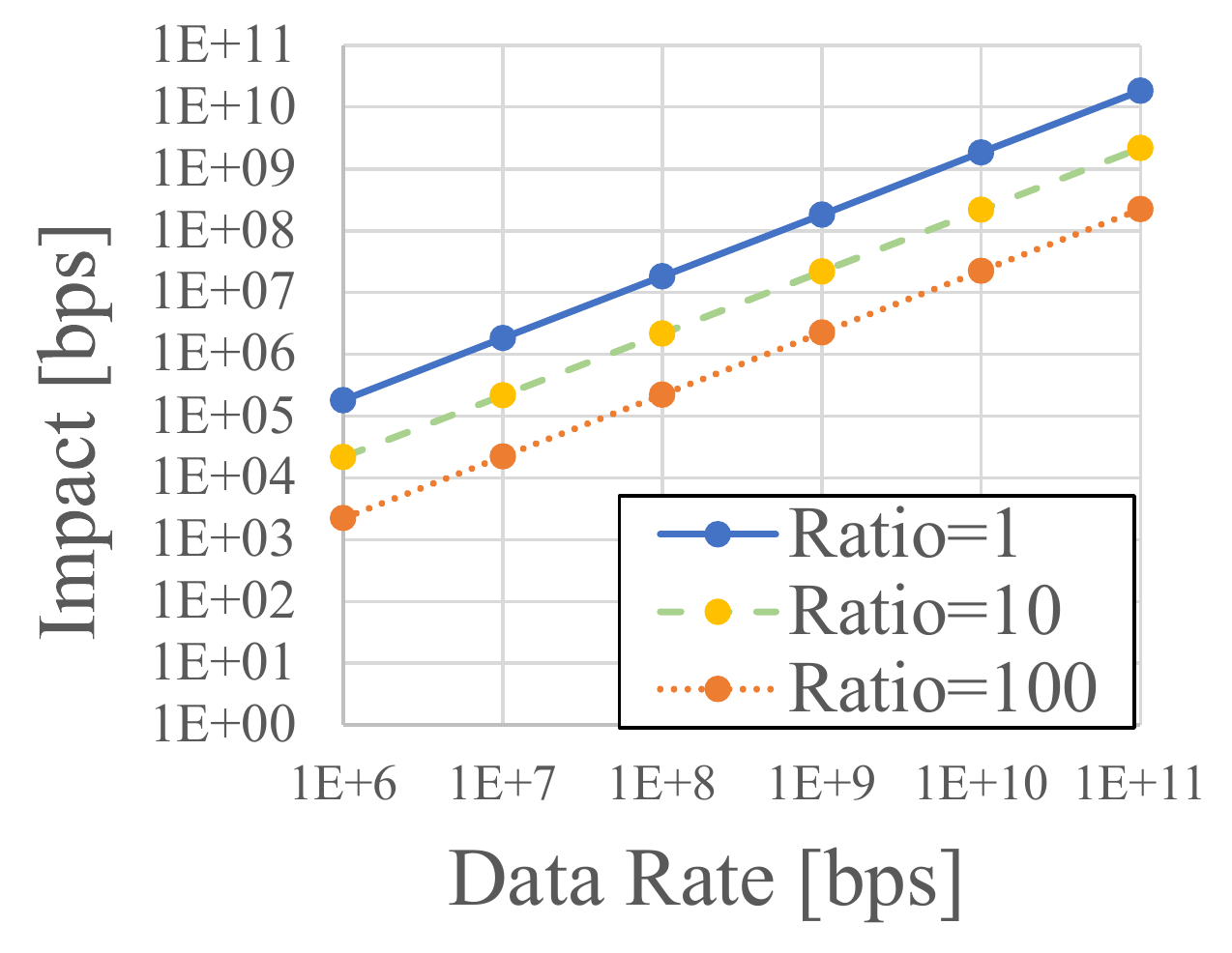}}
  \captionsetup{justification=centering}
\ifdefined\CutSpaceInfocom
  \caption{In-situ measurement.}
\else
  \caption{In-situ measurement impact as a function of the data rate (simulated).}
\fi
  \label{fig:InsituScaleSim}
  \end{subfigure}%
  \ifdefined\CutSpace\vspace{-3mm}\fi
  \caption{Scaling of the measurement impact (simulated) in passive and in in-situ measurement. Each curve represents a different measurement period (a) or sampling ratio (b).}
  \ifdefined\CutSpace\vspace{-6mm}\fi
  \label{fig:scalePassiveInsitu}
\end{figure}

\fi

\ifdefined\CutBeyond
\else
\ifdefined\AddSpace\vspace{1mm}\fi
\section{Beyond the Observer Effect}
\ifdefined\AddSpace\vspace{1mm}\fi
\label{BeyondSec}
In this paper we analyzed the impact of measurement overheads on the network performance. However, 
%network overhead is not the only side effect of observing the network.  
measuring and observing the network does not only add overhead traffic. It may also, in some cases, consume processing power in networking devices including hosts, switches and routers. This processing overhead may affect the data plane processing, the control plane processing, or both. In either case, this may impact the network performance. 
Moreover, the measurement overhead affects not only the network resources, but also the computing and storage resources of the hosts that take part in the measurement protocols and the server(s) that monitor and analyze the performance. 

Another aspect that should be considered is fate-sharing: the traffic that is used for measuring the network must follow the same path and forwarding policy as the measured user traffic. However, \inband\ measurement may increase the size of monitored packets, possibly causing the forwarding behavior to be different from that of the original user traffic. Thus, the measurement protocol may cause data traffic to be forwarded differently than without the measurement. Furthermore, fate-sharing is not guaranteed if \inband\ measurement is applied only to a subset of the traffic.
%The overhead affects the network bandwidth, inflicting packet loss, delay, delay variation, and thus causing measurement error.

%Our analysis in this paper focused on measurement inaccuracies that may be caused by measurement overhead. 
Measurement inaccuracy may also be affected by other factors. One factor may be security aspects; attackers may maliciously tamper with network measurements in order to create a false illusion of a network problem or to hide the existence of one~\cite{rfc7276}. Another factor that may affect the measurement accuracy in wide area networks is net neutrality; some service providers have been known to detect and assign higher priority to speed test traffic~\cite{dischinger2010glasnost}. Generally speaking, measurements may be affected by intermediate nodes that do not `play nice'. These considerations and other aspects of net neutrality have not been in the focus of the current paper, and are worthy of consideration.%have been studied in previous work and are still being studied.%, and are not within the scope of the current work.

%The bottom line is that the measurement error is correlative to the required measurement granularity, detail, and response time. The more detailed and up-to-date we want the measurement to be, the higher the measurement error. Observer effect.

%One may argue that the overhead is significant only in extreme cases when the network is loaded. The counter argument is that telemetry is especially important in cases of congestion, and that the measurement accuracy is less significant when the network is not loaded.

%This paper focuses on the traffic overhead. One could also analyze the processing overhead of the measurement, which was not analyzed here.

\fi

\section{Related Work}
\ifdefined\AddSpace\vspace{1mm}\fi
\begin{sloppypar}
The uncertainty principle and the observer effect have been widely discussed in the literature (e.g.,~\cite{heisenberg1930physical,heisenberg1958physics}). In computer science, the observer effect was analyzed in the context of computing performance~\cite{mytkowicz2008observer}, but to the best of our knowledge the current paper is the first to consider and analyze the observer effect in the context of communication networks. The literature is rich with work about network measurement methods: passive (e.g.,~\cite{Cisco,rfc7011,yu2013flowsense,zhu2015packet}), active (e.g.,~\cite{IEEE802.1ag,Y1731,mittal2015timely,zhu2015packet}), and \inband\ measurement~\cite{kimband,INT,IOAM,kumar-ippm-ifa-01}.

The fact that network measurement incurs overhead is common knowledge (see ~\cite{wang2014timing,eriksson2008network,popescu2017ptpmesh,zhu2015packet,rfc7799}), as is the fact that there is a tradeoff between a measurement's accuracy and its impact~\cite{soule2005traffic,li2019hpcc,dischinger2010glasnost}. We are, however, unaware of prior formal and \textit{quantitative} analyses of the reciprocal relation between the measurement overhead and the measurement accuracy (and, in particular, in analogy to principles from quantum mechanics).
\end{sloppypar}

\ifdefined\AddSpace\vspace{1mm}\fi
\begin{sloppypar}
\section{Conclusion}
\ifdefined\AddSpace\vspace{1mm}\fi
We presented and formalized an observer effect for computer networks, which captures the interplay between network performance and its measurement. The observer effect yields a delicate tradeoff between the required measurement granularity and the imposed performance impact, affecting both the networking and computing resources.

We believe that the cost efficiency of any network measurement and monitoring method should be evaluated with this granularity-impact tradeoff in mind. 
\begin{comment}
Specifically, a measurement method should be evaluated in terms of the data path overhead and the management overhead that it requires. In both aspects, the overhead should be quantified both in terms of the per-packet overhead and the per-time-unit overhead.

While the analysis in this paper is focused on network overhead, computing and storage overhead are left for further consideration. A big-data-oriented approach, where centralized intelligent processing uses large-scale continuous measurements to monitor the network clearly consumes high computing and storage resources. A key future direction is to develop network-device-level intelligence that selectively exports telemetry information to the analyzer, thereby significantly reducing the centralized computing and storage overhead.

Our preliminary evaluation shows that measurement protocols are not `one size fits all'. The observer effect takes a different form in each measurement method depending on the network scale, bandwidth, number of flows, and other parameters. This behavior opens the door for future research directions that combine different measurement methods, and that adapt the measurement protocols and granularity to the network state, requiring low overhead in routine operation, and increasing the overhead when necessary, or using the optimal measurement method based on the number of active flows and their average rate.
\end{comment}
We view our network observer effect/factor framework as a first step towards better understanding and theoretical evaluation of efficient network measurement. As networks continue to grow in size, requiring increasingly high visibility and transparency, the observer effect provides a way to formally capture and analyze how measurement impacts performance.
\end{sloppypar}

%\ifdefined\AddSpace\vspace{1mm}\fi

%\vspace{15mm}

%\balance
%\bibliographystyle{abbrv}
\bibliographystyle{ieeetr}
%\balance
\bibliography{Uncert}
\end{document}